\documentclass[showpacs,superscriptaddress,notitlepage]{revtex4-1}

\usepackage{amsfonts}
\usepackage{amssymb,amsthm,epsfig}
\usepackage{amsmath}
\usepackage{graphicx,color,adjustbox}
\usepackage{dsfont}
\usepackage[small]{subfigure}
\usepackage{fixmath}

\newtheorem{thm}{Theorem}[section]
\newtheorem{lem}[thm]{Lemma}

\begin{document}

\title{A Poisson Kalman filter for disease surveillance}

\author{Donald Ebeigbe$^{*1}$, Tyrus Berry$^{*2}$, Steven J. Schiff$^{1,3}$, Timothy Sauer$^2$}

\begin{abstract} An optimal filter for Poisson observations is developed as a variant of the traditional Kalman filter.  Poisson distributions are characteristic of infectious diseases, which model the number of patients recorded as presenting each day to a health care system. We develop both a linear and nonlinear (extended) filter. The methods are applied to a case study of neonatal sepsis and postinfectious hydrocephalus in Africa, using parameters estimated from publicly available data. Our approach is applicable to a broad range of disease dynamics, including both noncommunicable and the inherent nonlinearities of communicable infectious diseases and epidemics such as from COVID-19.
\end{abstract}

\maketitle

$^*$Contributed equally.

$^1$Center for Neural Engineering, Department of Engineering Science and Mechanics, The Pennsylvania State University, University Park, PA, USA

$^2$Department of Mathematical Sciences, George Mason University, Fairfax, VA, USA

$^3$Center for Infectious Disease Dynamics, and Departments of Neurosurgery and Physics, The Pennsylvania State University, University Park, PA, USA

\section{Introduction}

There has been significant recent interest in the model-based control of disease, specifically using prevention and treatment as methods of control \cite{control1,control2,control3,control4,LQRpaper}. Such model-based  frameworks have been instrumental in our understanding of the dynamics and control of infectious diseases \cite{Keeling2008}, and strategies for global public health policies \cite{Diggle2019}. Successful applications of mathematical modeling and control depend on accurate determination of system states, where data assimilation methods such as Kalman filters \cite{DiseaseFiltering1} play a crucial role in constraining the model with available data.

 Although Kalman filters typically assume Gaussian distributed observations that have direct functional relationships to the state variables, this is unlikely to be suitable for diseases with low rates of occurrence, as is the case in the early or late stages of the spread of most infectious diseases. While a Poisson distribution with a large rate constant can be well approximated by a Gaussian of the same mean and variance, the approximation breaks down when the rates of occurrences are much smaller~\cite{curtis1975simple}. Even more importantly, the variance of a Poisson observation  changes along with its mean, whereas the mean and variance of a Gaussian are decoupled, and often the variance is assumed to be constant (or at least unrelated to the mean) in the Kalman filtering context. 

In this article, we argue that the standard application of Kalman filtering methods is poorly matched to data available during disease surveillance. 
In particular, the assumption of Gaussian noise-perturbed observations is a clear source of inaccuracy when used to model the arrival of patients at medical facilities. We develop a variant of the Kalman filter that assumes Poisson observations, and show how to modify the traditional Kalman equations to produce an optimal filter.

   For one-dimensional systems, an optimal filter has been previously designed for Poisson observations \cite{PoissonFilter}, but has not been generalized to multivariate systems.  Moreover, in order to summarize the true distribution of the state $\vec x_k$ at time step $k$ given the Poisson observations in \cite{PoissonFilter}, a very large number of variables needed to be stored and recalculated at each step. In fact, the number of variables needed also grows very quickly with $k$ (compared to the Kalman filter where the number of variables tracked is constant in $k$).  Another alternative would be to use a fully Bayesian approach such as a particle filter designed with the Poisson likelihood function.  Such an approach would be guaranteed to estimate the true posterior given a sufficiently large ensemble, but such large ensembles often result in high computational complexity.  Instead, we propose a filter which is very similar to the Kalman filter, but is adapted to the unique statistics of the Poisson observations.  The proposed approach maintains the simplicity and computational efficiency of a Kalman filter by only tracking the mean and covariance of the estimated state.  A related linear filter called the Generalized Kalman Filter (GKF) was introduced in \cite{GKFpoisson}, which employed a fixed observation noise covariance matrix that is optimal among all linear filters that are fixed in time.  In contrast, we will derive the optimal time-varying linear filter, and we will use the state estimate to update the observation noise covariance matrix dynamically.  In fact, we will show that the optimal linear filter for Poisson observations is almost identical to the standard Kalman filter except that the observation noise covariance matrix depends on the state estimate.    
   
 In Section \ref{PKFsection}, we first show that by choosing an appropriate observation map, the standard Kalman filter gives an unbiased estimator for Poisson observations.  This justifies using a Kalman filter in the disease modeling context, as long as the observation map is well chosen.  We then show how to modify the Kalman equations to produce an optimal linear filter in the sense of minimizing the expected squared errors. We prove the optimality of this choice in Appendix A. While the optimal filter nominally requires knowledge of the true state, we show empirically that using the filter estimate of the state gives near-optimal performance. We call this approach the Poisson Kalman Filter (PKF).
  
Recently, Li et al.~\cite{li2020substantial} assimilate Poisson observations to carry out modeling of the coronavirus (COVID-19) epidemic. Their modifications to the traditional Kalman filter are in the same spirit to those proposed here, in that the observation noise covariance matrix $V$ is designed to vary with the data. In this article, we derive the Kalman equations that lead to the optimal linear filter, and prove that the optimal choice for linear dynamics is to set $V_k$  to vary proportionally to the number of predicted cases. Nonlinear extensions of the Kalman filter follow standard strategies of generalizing the linear formulas (e.g. the Extended and Ensemble Kalman filters \cite{Schiff2012}). We  develop a nonlinear Extended PKF (EPKF) in Section \ref{EPKF} suitable for contagious infectious disease.  

We should note that an extended Kalman filter has previously been developed in \cite{PoissonProcess1} for point processes where the observation increments are conditionally Poisson given a stochastic hidden variable.  Related approaches were applied to crime statistics in \cite{PoissonProcess3} and neuronal signals in \cite{PoissonProcess2}. In \cite{PoissonProcess2,PoissonProcess3} they assume that the Poisson rates are functions of a hidden state variable, $\vec x_k$ that evolves according to a Markov model, $\vec x_{k+1}=\vec x_k + \textup{noise}$.  In our approach we allow a larger class of stochastic models $\vec x_{k+1}=f(\vec x_k)+\textup{noise}$ with non-trivial dynamics.  Moreover, the filters in \cite{PoissonProcess2,PoissonProcess3} are derived as a Gaussian approximation to a Bayesian posterior, which leads to a nonlinear filter in \cite{PoissonProcess3}. Instead, we take a novel approach by deriving an optimal linear filter for Poisson observations.  

Our case studies start from compartmental models which are built on the standard SIR model and its variants \cite{Keeling2008}.  The SIR model tracks three variables which represent three populations, susceptible (S), infected (I), and recovered (R).  A key feature for communicable disease is that the rate of increase of the infected is proportional to the product of the susceptible and infected populations, $SI$, a nonlinear interaction term that is motivated by the contagious nature of the diseases being modeled.

 In Section \ref{SIRmodel} below, we introduce an SIR model for noncommunicable diseases and show how to apply the Poisson Kalman filter to track the model from example data from two endemic diseases affecting childhood health in Africa -- neonatal sepsis (NS) and postinfectious hydrocephalus (PIH) -- in Sections \ref{SIRmodel} and \ref{SIRHmodel}. Although many of these infections are noncommunicable, acquired during  birth or from the environment afterwards, there is new evidence supporting a role for communicable viruses \cite{Paulson2020}. To our knowledge, there is no existing computational framework that embodies the interdependent dynamics of NS and PIH.  We show how the use of the PKF and EPKF can fill this need.

We discuss future directions both for more detailed study of NS and PIH, and for further extensions of the filtering for infectious disease epidemics, in Section \ref{futuredirections}.

\section{Data assimilation from Poisson observations}\label{PKFsection}

Estimating the current state of a dynamical system is a critical challenge when applying compartmental modeling to disease forecasting and control.  Data assimilation is a method of estimating the state from a time series of noisy observations.  In particular, for a linear system $\vec{x}_{k+1} = f(\vec{x}_k)$, the Kalman filter \cite{KF} gives the optimal state estimate (minimal variance) and also quantifies the uncertainty in the estimate.  However, the Kalman filter was designed for engineering applications where the observations are assumed to have a direct functional relationship to the state variables, except perturbed by Gaussian noise.  

There are at least three reasons why this assumption fails for typical disease surveillance. First, counts of individuals with a disease are by definition nonnegative, contradicting the Gaussian model for uncertainty. Second, the size of the Gaussian noise is decoupled from the population count, being the same magnitude for low populations as for large populations. Finally, in order for the population to be the observed variable, one would have to make a survey, at each time step $k$, of the entire population to directly observe $I_k$, the number of infected at time $k$.  Since this is an unrealistic proposal, the filtering method needs to be adapted to the type of observations that are practical for disease surveillance. We will refer to this modification of the Kalman filter by the name Poisson Kalman Filter (PKF), which we show to be unbiased and optimal among all linear filters.

We operate under that assumption that the disease population cannot be measured directly.
In fact, a reasonable model for observations of disease cases, for example those presenting at a hospital, is a Poisson process, whose rate is proportional to the infected population.  Assume that at time step $k$, the number of new infected patients $I_k$will be approximated by a Poisson random variable with rate $\lambda_{k,I} = c_I I_k$, where $c_I$ is a proportionality constant. 

In a typical filtering problem we would assume that we are given direct observations, $\vec y_k$, of the form $B\vec x_k +  \vec \nu_k$ where $\vec \nu_k$ are random variables representing observation noise.  However, in the Poisson observation context, we instead observe a pair of independent Poisson random variables with rates given by the components of $B \vec x_k$.  We will denote this type of observation by
\[ \vec y_k \sim \textup{Poisson}(B  \vec x_k) \]
meaning that $(\vec y_k)_i$ is Poisson with rate $(B \vec x_k)_i$.  To be more precise we assume that, conditional to $B\vec  x_k$, the components $(\vec y_k)_i$ are independent Poisson random variables with density function,
\[ P\left((\vec y_k)_i = z \, | \, (B  \vec x_k)_i = \lambda \right) = \frac{\lambda^z}{z!}e^{-\lambda} = \frac{((B \vec x_k)_i)^{(\vec y_k)_i}}{((\vec y_k)_i)!}e^{-(B \vec x_k)_i}. \]
The above conditional density makes it clear that $\vec y_k$ and $\vec x_k$ are not independent.

In the case of direct observations, one typically assumes that $\vec y_k$ splits into a sum of two terms, the first of which has deterministic dependence on $\vec x_k$ and the second of which is independent of $\vec x_k$.  However, for Poisson observations this splitting is not possible.  Despite this irreconcilable dependence between $\vec y$ and $\vec x$ the following Lemma shows that if we appropriately center $\vec y$, namely $\vec y - \mathbb{E}[\vec y \, | \, \vec x]$, the result is not correlated with $\vec x$.  

\begin{lem}\label{lem1} Let $\lambda$ be an arbitrary random variable and let $z$ be a Poisson random variable with rate $\lambda$ so that the conditional density of $z$ is $P(z \, | \, \lambda) = \frac{\lambda^z}{z!}e^{-\lambda}$.  Then $\mathbb{E}[(\lambda - \mathbb{E}[\lambda])(z - \mathbb{E}[z \, | \, \lambda])] = 0$.
\end{lem}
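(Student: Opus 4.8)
The plan is to reduce everything to the elementary fact that a Poisson random variable with rate $\lambda$ has conditional mean equal to $\lambda$, and then to apply the tower property of conditional expectation. First I would note that the hypothesis $P(z \mid \lambda) = \frac{\lambda^z}{z!}e^{-\lambda}$ gives immediately $\mathbb{E}[z \mid \lambda] = \lambda$, so the quantity to be evaluated is simply $\mathbb{E}\bigl[(\lambda - \mathbb{E}[\lambda])(z - \lambda)\bigr]$.

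Next I would condition on $\lambda$ and use iterated expectation: $\mathbb{E}\bigl[(\lambda - \mathbb{E}[\lambda])(z-\lambda)\bigr] = \mathbb{E}\bigl[\,\mathbb{E}[(\lambda - \mathbb{E}[\lambda])(z-\lambda) \mid \lambda]\,\bigr]$. Since $\lambda - \mathbb{E}[\lambda]$ is a (deterministic) function of $\lambda$, hence $\sigma(\lambda)$-measurable, it pulls out of the inner conditional expectation, leaving $\mathbb{E}\bigl[(\lambda - \mathbb{E}[\lambda])\,\mathbb{E}[z - \lambda \mid \lambda]\bigr]$. The inner factor is $\mathbb{E}[z \mid \lambda] - \lambda = 0$, so the whole expression vanishes, which is the claim.

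The only points requiring any care are the integrability conditions that make each expectation and the factoring step legitimate; it suffices, for instance, that $\mathbb{E}[\lambda]$ and $\mathbb{E}[\lambda^2]$ are finite, which holds in all cases of interest here because the rates are proportional to bounded population counts. I would also remark explicitly that no independence between $\lambda$ and $z$ is assumed or used — this is in fact the whole content of the lemma, namely that centering $\vec y$ by its conditional mean $\mathbb{E}[\vec y \mid \vec x]$ decorrelates it from $\vec x$ despite the unavoidable dependence between the two. I do not anticipate any genuine obstacle: once the identity $\mathbb{E}[z \mid \lambda] = \lambda$ is recorded, the result is a one-line application of the tower property.
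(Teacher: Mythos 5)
Your proof is correct and follows essentially the same route as the paper's: both condition on $\lambda$, use the tower property to pull the $\sigma(\lambda)$-measurable factor $\lambda - \mathbb{E}[\lambda]$ out of the inner expectation, and observe that $\mathbb{E}[z - \mathbb{E}[z\,|\,\lambda]\,|\,\lambda]=0$. Your added remarks on integrability and on the absence of any independence assumption are accurate but do not change the argument.
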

\begin{proof}
We first apply the law of total expectation to compute $\mathbb{E}[z] = \mathbb{E}[\mathbb{E}[z\, | \, \lambda]] = \mathbb{E}[\lambda]$ since $\lambda$ is the expected value of a Poisson random variable with known rate $\lambda$.
We then apply the law of total expectation,
\begin{align} \mathbb{E}[(\lambda - \mathbb{E}[\lambda])(z - \mathbb{E}[z \, | \, \lambda]])] &= \mathbb{E}[\mathbb{E}[(\lambda - \mathbb{E}[\lambda])(z - \mathbb{E}[z \, | \, \lambda]]) \, | \, \lambda]]  \nonumber \\
&= \mathbb{E}[(\lambda - \mathbb{E}[\lambda])\mathbb{E}[(z - \mathbb{E}[z \, | \, \lambda]]) \, | \, \lambda]] \nonumber \\
&= \mathbb{E}[(\lambda - \mathbb{E}[\lambda])(\mathbb{E}[z \, | \, \lambda] - \mathbb{E}[z \, | \, \lambda]])] = 0 \nonumber
\end{align}
where the second equality follows from the inner expectation being conditioned on $\lambda$ and the third follows from the linearity of the expectation.
\end{proof}

Lemma \ref{lem1} turns out to be the key to deriving an optimal linear filter for Poisson observations. While Poisson observations are a more realistic model for the type of data available in disease modeling, we now must design a filter which can assimilate this data and produce estimates of the state variable $\vec x_k$.

\subsection{The Poisson Kalman Filter (PKF)}

A linear filter produces an estimate $\hat x_k$ of the true state $\vec x_k$ of the form,
\[ \hat x_k = A_1 \hat x_{k-1} + A_2 \vec y_k \]
where $A_1,A_2$ are matrices.  This is a more restricted class of filters, but we will be able to show that our filter is unbiased, meaning $\mathbb{E}[\hat x_k] = \vec x_k$, and is the optimal linear filter in the sense of giving the minimal squared error.

The PKF assumes a model of the form,
\begin{align}
    \vec x_k &= F \vec x_{k-1} + \vec b_k + \vec \omega_{k-1} \\
    \vec y_k &\sim \textup{Poisson}(B \vec x_k)
\end{align}
where $\vec b_k$ is a known deterministic forcing term, and  $\vec \omega_k$ is dynamical noise with mean zero ($\mathbb{E}[\vec \omega_k] = 0$) and known covariance matrix, $\mathbb{E}[\vec \omega_k \vec \omega_k^\top] = W$.  We also assume that the $\vec \omega_k$ are independent of $\vec x_k, \vec y_k,$ and all other $\vec \omega_\ell$ for $\ell \neq k$. The PKF also assumes that model, $F$, and observation matrices, $B$, are known.  We note that the dynamics $F$ and observation matrix $B$ can also be allowed to change at each step (nonautonomous), but to simplify the notation we assume they are constant.

Like the standard Kalman filter, the PKF is a two-step filter, meaning that it breaks down the estimation of $\hat x_k^+$ from $\hat x_{k-1}^+$ into a forecast step and an assimilation step.  In the forecast step we apply the model to our current estimate $\hat x_{k-1}^+$ to produce the forecast,
\begin{equation}\label{forecast} \hat x_k^- = F \hat x_{k-1}^+ \end{equation}
and in the assimilation step we \emph{assimilate} the new observation by,
\begin{equation}\label{assimilate} \hat x_k^+ = \hat x_k^- + K_k(y_k - B \hat x_k^-). \end{equation}
It is easy to see that this is a linear filter with $A_1 = (I-K_kB)F$ and $A_2 = K_k$.  The filter is defined by the choice of the matrix $K_k$ which is called the \emph{gain matrix}.  Our first result is that any filter of the form \eqref{assimilate} is unbiased.
\begin{thm}\label{unbiased}
Assume that $\mathbb{E}[\hat x_0^+] = \vec x_0$, then for any choice of gain matrices $K_k$ the two step filter defined by \eqref{forecast} and \eqref{assimilate} is unbiased, meaning $\mathbb{E}[\hat x_k^+] =  \vec x_k$.
\end{thm}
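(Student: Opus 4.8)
The plan is to prove the claim by induction on $k$, propagating unbiasedness separately through the forecast step \eqref{forecast} and the assimilation step \eqref{assimilate}. The base case $k=0$ is exactly the hypothesis $\mathbb{E}[\hat x_0^+]=\vec x_0$. For the inductive step I would assume $\mathbb{E}[\hat x_{k-1}^+]=\vec x_{k-1}$ (if the true trajectory is itself random this should be read as $\mathbb{E}[\hat x_{k-1}^+]=\mathbb{E}[\vec x_{k-1}]$, the two notions coinciding when $\vec x_0$ is deterministic; I will suppress this distinction below) and show the same holds at step $k$.

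First I would dispatch the forecast step. Taking expectations in $\hat x_k^- = F\hat x_{k-1}^+ + \vec b_k$ and in the model equation $\vec x_k = F\vec x_{k-1}+\vec b_k+\vec\omega_{k-1}$, using linearity of expectation and $\mathbb{E}[\vec\omega_{k-1}]=0$, the inductive hypothesis immediately gives $\mathbb{E}[\hat x_k^-]=F\mathbb{E}[\hat x_{k-1}^+]+\vec b_k = F\vec x_{k-1}+\vec b_k = \mathbb{E}[\vec x_k]$; that is, the forecast is itself unbiased.

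The substantive step is the assimilation update $\hat x_k^+ = \hat x_k^- + K_k(\vec y_k - B\hat x_k^-)$. Taking expectations and using only linearity, $\mathbb{E}[\hat x_k^+] = \mathbb{E}[\hat x_k^-] + K_k\big(\mathbb{E}[\vec y_k] - B\,\mathbb{E}[\hat x_k^-]\big)$. Here the Poisson structure enters: although $\vec y_k$ and $\vec x_k$ are not independent, the observation is \emph{conditionally} unbiased, since $(\vec y_k)_i \mid (B\vec x_k)_i$ is Poisson with rate $(B\vec x_k)_i$, so $\mathbb{E}[\vec y_k \mid \vec x_k] = B\vec x_k$, and hence by the law of total expectation $\mathbb{E}[\vec y_k] = \mathbb{E}[B\vec x_k] = B\,\mathbb{E}[\vec x_k]$. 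Substituting this and the forecast identity $\mathbb{E}[\hat x_k^-]=\mathbb{E}[\vec x_k]$ from the previous paragraph, the correction term collapses, $K_k\big(B\,\mathbb{E}[\vec x_k] - B\,\mathbb{E}[\hat x_k^-]\big)=0$, so $\mathbb{E}[\hat x_k^+]=\mathbb{E}[\hat x_k^-]=\vec x_k$, closing the induction.

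I do not expect a genuine obstacle: the argument needs no independence between $\hat x_k^-$ (a function of $\vec y_1,\dots,\vec y_{k-1}$ only) and $\vec y_k$, relying purely on linearity of expectation, and the dependence between $\vec y_k$ and $\vec x_k$ is harmless because it washes out at the level of first moments — this is the simplest (conditional-mean) incarnation of the cancellation already isolated in Lemma \ref{lem1}, and is precisely why \emph{any} gain $K_k$ works. The only points needing care are the bookkeeping of the deterministic forcing $\vec b_k$ through the forecast and, when the trajectory is stochastic, being explicit that the conclusion $\mathbb{E}[\hat x_k^+]=\vec x_k$ means $\mathbb{E}[\hat x_k^+]=\mathbb{E}[\vec x_k]$.
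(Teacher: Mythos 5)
Your proof is correct and follows essentially the same route as the paper's: induction on $k$, linearity of expectation, and the key fact that the Poisson observation is conditionally unbiased, $\mathbb{E}[\vec y_k \mid \vec x_k] = B\vec x_k$, so that $\mathbb{E}[\vec y_k] = B\,\mathbb{E}[\vec x_k]$ and the correction term vanishes for any gain $K_k$. The only difference is organizational: the paper proves the static (recursive WLS) case in the appendix and attaches the dynamics afterwards, whereas you propagate unbiasedness explicitly through the forecast step as well; your remark that the conclusion should be read as $\mathbb{E}[\hat x_k^+]=\mathbb{E}[\vec x_k]$ when the trajectory is stochastic is also a fair point that the paper glosses over.
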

The proof of Theorem~\ref{unbiased} is straightforward and can be found in Appendix~\ref{theorem_unbiased}.  The gain matrix is determined by a secondary set of computations which track the covariance matrix, $P_k^+$ for the estimate $\hat x_k^+$.  The covariance matrix is also evolved according to a two step evolution starting with a forecast step,
\[ P_k^- = F P_{k-1}^+ F^\top + W \]
which allows us to calculate the optimal gain matrix,
\begin{equation}\label{gain} K_k = P_k^- B^\top (BP_k^- B^\top + V_k)^{-1} \end{equation}
and then we can complete the assimilation step
\[ P_k^+ = (I - K_kB)P_k^- (I - K_kB)^\top + K_k V_k K_k^\top. \]
While it may seem that $P_k$ is only really necessary in order to compute the gain matrix $K_k$, the matrix $P_k$ also gives an error estimate for the state estimate.

The final component that is required is the $V_k$ matrix in the formula for the optimal gain.  In the standard Kalman filter, $V_k$ is the covariance matrix for the observation noise.  However, in the PKF the variance of the observations is equal to $B \vec x_k$ (meaning $\textup{var}((\vec y_k)_i) = (B \vec x_k)_i$).  So intuitively, we would expect to use $V_k = \textup{diag}(B \vec x_k)$. The next theorem states that this yields the optimal linear filter.

\begin{thm}\label{minimumvariance}
Among all linear filters, the filter given by~\eqref{forecast} and \eqref{assimilate} with gain matrix $K_k$ given by~\eqref{gain} where $V_k = \textup{diag}(B x_k)$ is optimal in the sense of minimal sum of squared errors.  In other words, 
\[ \frac{\partial J_k}{\partial K_k}  = 0\] 
where
\[ J_k =  \textup{trace}(P_k) =  \mathbb{E}[||\hat x_k -  \vec x_k||_2^2] = \sum_i \mathbb{E}[(\hat x_k - \vec x_k)_i^2] \] 
\end{thm}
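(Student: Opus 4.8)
The plan is to follow the classical Kalman-gain derivation, with Lemma~\ref{lem1} supplying the single ingredient that is special to Poisson observations. Write $e_k^- = \hat x_k^- - \vec x_k$ and $e_k^+ = \hat x_k^+ - \vec x_k$ for the forecast and analysis errors; by Theorem~\ref{unbiased} these have mean zero, so the matrices $P_k^\pm := \mathbb{E}[e_k^\pm (e_k^\pm)^\top]$ are genuine covariances, and I would establish the two-step recursion for them by induction on $k$ (the base case being the assumed covariance of $\hat x_0^+ - \vec x_0$). For the forecast half, subtracting $\vec x_k = F\vec x_{k-1} + \vec b_k + \vec\omega_{k-1}$ from \eqref{forecast} gives $e_k^- = F e_{k-1}^+ - \vec\omega_{k-1}$ (the deterministic parts cancel), and since $\vec\omega_{k-1}$ has mean zero, covariance $W$, and is independent of $\hat x_{k-1}^+$ and $\vec x_{k-1}$, this yields $P_k^- = F P_{k-1}^+ F^\top + W$. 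For the assimilation half, inserting $\vec y_k - B\hat x_k^- = (\vec y_k - B\vec x_k) - B e_k^-$ into \eqref{assimilate} produces the error identity
\[ e_k^+ = (I - K_k B)\,e_k^- + K_k(\vec y_k - B\vec x_k). \]

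The step I expect to be the main obstacle is passing from this identity to the covariance update, because the two summands are not obviously uncorrelated: $\vec y_k$ and $\vec x_k$ are not independent, so the usual ``observation noise is independent of the state'' shortcut is unavailable. Expanding $P_k^+ = \mathbb{E}[e_k^+ (e_k^+)^\top]$ produces, besides the ``square'' terms $(I-K_kB)P_k^-(I-K_kB)^\top$ and $K_k\,\mathbb{E}[(\vec y_k - B\vec x_k)(\vec y_k - B\vec x_k)^\top]\,K_k^\top$, cross terms built from $\mathbb{E}[e_k^-(\vec y_k - B\vec x_k)^\top]$. These vanish by (the natural extension of) Lemma~\ref{lem1}: conditioning on $\vec x_k$ together with the entire history that determines $e_k^-$, the centered observation $\vec y_k - B\vec x_k = \vec y_k - \mathbb{E}[\vec y_k\mid\vec x_k]$ has conditional mean zero while $e_k^-$ is measurable with respect to that history, so the law of total expectation gives $\mathbb{E}[e_k^-(\vec y_k - B\vec x_k)^\top] = 0$ — this is exactly the ``centering removes the correlation'' phenomenon pointed out just before the Lemma, with $e_k^-$ playing the role of $\lambda - \mathbb{E}[\lambda]$ and using that, given $\vec x_k$, the observation $\vec y_k$ is independent of the past. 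For the surviving quadratic term, conditional on $\vec x_k$ the components of $\vec y_k$ are independent Poisson with rates $(B\vec x_k)_i$, so $\mathbb{E}[(\vec y_k - B\vec x_k)(\vec y_k - B\vec x_k)^\top\mid\vec x_k] = \textup{diag}(B\vec x_k)$. Setting $V_k = \textup{diag}(B\vec x_k)$ therefore turns the expansion into exactly the stated update $P_k^+ = (I - K_kB)P_k^-(I-K_kB)^\top + K_k V_k K_k^\top$ — with the value of $V_k$ now \emph{forced} by the Poisson statistics rather than postulated.

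With the covariance recursion in hand the remainder is the standard minimization. I would expand, using the symmetry of $P_k^-$,
\[ J_k = \textup{trace}(P_k^+) = \textup{trace}(P_k^-) - 2\,\textup{trace}(K_k B P_k^-) + \textup{trace}\big(K_k(B P_k^- B^\top + V_k)K_k^\top\big), \]
differentiate with respect to $K_k$ via $\partial_X\textup{trace}(XM) = M^\top$ and $\partial_X\textup{trace}(XMX^\top) = X(M + M^\top)$, and set the result to zero:
\[ \frac{\partial J_k}{\partial K_k} = -2\,P_k^- B^\top + 2\,K_k(B P_k^- B^\top + V_k) = 0 \quad\Longrightarrow\quad K_k = P_k^- B^\top(B P_k^- B^\top + V_k)^{-1}, \]
which is precisely \eqref{gain}. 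Because $B P_k^- B^\top + V_k \succ 0$ (its diagonal contribution $V_k = \textup{diag}(B\vec x_k)$ has strictly positive entries whenever the prevalences are positive), $J_k$ is a strictly convex quadratic in $K_k$, so this critical point is the unique global minimizer. Finally, to justify ``among all linear filters'' I would note that an unbiased linear filter $\hat x_k = A_1\hat x_{k-1} + A_2\vec y_k + (\textup{const})$ must, by matching $\mathbb{E}[\hat x_k]$ to $\vec x_k$ along every admissible trajectory, satisfy $A_1 = (I - A_2 B)F$ with its constant term pinned down accordingly; hence it already has the two-step form \eqref{forecast}--\eqref{assimilate} with $K_k = A_2$, and optimizing $J_k$ over $K_k$ is the same as optimizing over all (unbiased) linear filters, which completes the argument.
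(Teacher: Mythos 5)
Your proposal is correct and follows essentially the same route as the paper's proof in Appendix~\ref{filterapp}: the error identity $e_k^+ = (I-K_kB)e_k^- + K_k(\vec y_k - B\vec x_k)$, the vanishing of the cross terms via conditioning on the state and past observations (the content of Lemma~\ref{lem1}), the identification of the surviving quadratic term's covariance as $\textup{diag}(B\vec x_k)$, and the trace minimization yielding \eqref{gain}. The only differences are presentational: you work directly with the dynamic two-step filter rather than first treating the static recursive weighted-least-squares case, and you add the (welcome) observation that any unbiased linear filter must already have the form \eqref{forecast}--\eqref{assimilate}.
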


The proof of Theorem \ref{minimumvariance} is closely related to Lemma \ref{lem1} and can be found in~\ref{minimumvariance_thm}.
Unfortunately, the optimal filter is not accessible since it requires access to the true state $\vec x_k$ in order to define the optimal gain matrix.  Instead, since $\hat x_k$ is an unbiased estimator (for any gain matrix) we approximate the optimal filter by using $V_k = \textup{diag}(B\hat x_{k}^-)$.  We call this approximation the Poisson Kalman Filter (PKF). 

\subsection{PKF Equations}

The discrete-time Poisson Kalman filter (PKF) algorithm is given below.  In order to connect with the potential optimal control applications we include the control term $G_{k-1}\vec u_{k-1}$. If there is no control this term can be dropped.  We also allow all the matrices to vary with time.
 \begin{enumerate}
 \item[1] Dynamical system
 \begin{align}
 \vec x_k &= \max(0,F_{k-1} \vec x_{k-1} +  G_{k-1}\vec u_{k-1} + \vec b_k + \vec w_{k-1}), &   \vec w_k &\sim \mathcal{N}(0,\,W_k)  \nonumber \\
\vec y_k &\sim \textup{Poisson}(B_k \vec x_k)\nonumber \\
 \mathbb{E}[\vec w_k \vec w_j^\top] &= W_k \delta_{k-j} \nonumber \\
 \mathbb{E}[\vec y_k \vec y_j^\top] &= \textup{diag}(B_k \vec x_k)\delta_{k-j} \nonumber \\
 \mathbb{E}[w_k y_j^T] &= 0
 \end{align}
 where $\delta_{k-j}$ is the Kronecker delta function, such that $\delta_{k-j} = 1$ if $k = j$, and $\delta_{k-j} = 0$ if $k \neq j$.  When the state is close to zero the Gaussian noise may move the system into negative values, so at each step we take the maximum of each component and zero. In all the comparisons below, we also apply this maximum to the Kalman filter and extended Kalman filter simulations.  Note that $\textup{diag}(B_k \vec x_k)$ is the true variance of the Poisson observation $\vec y_k$.  However, in the filter below we set $V_k = \textup{diag}(B_k \hat x_k^-)$ since this is the best available estimate.  We now summarize the steps required to obtain the PKF estimates.
 
 \item[2] Initialization
 \begin{eqnarray}
 \hat{x}_0^+ &=& \mathbb{E}[\vec x_0] \nonumber \\
 P_0^+ &=&  \mathbb{E} \left[(\vec x_0 - \hat{x}_0^+)(\vec x_0 - \hat{x}_0^+)^\top \right]
 \end{eqnarray}

 \item[3] Prior estimation (forecast step)
\begin{eqnarray}
\hat{x}_k^- &=& F_{k-1} \hat{x}_{k-1}^+  +  G_{k-1}\vec u_{k-1} + \vec b_k \\
P_k^- &= & F_{k-1} P_{k-1}^+ F_{k-1}^\top + W_{k-1} \\
V_k &=& \textup{diag}(\max(\delta,B_k\hat x_k^-)) \end{eqnarray}

\item[4] Posterior estimation (assimilation step)
\begin{eqnarray}
K_k & = & P_k^- B_k^\top  \left( B_k P_k^- B_k^\top    + V_k \right)^{-1} \\
\hat{x}_k^+ &=& \max\left(0,\hat{x}_k^- + K_k \left(\vec y_k - B_k \hat{x}_k^-   \right) \right) \\
P_k^+ &=&  ( I - K_k B_k) P_k^- ( I - K_k B_k)^\top +  K_k V_k K_k^\top 
\end{eqnarray}
\end{enumerate}
Notice that before the diagonal matrix $V_k$ is formed, we first take the maximum of the diagonal entries and a constant $\delta$.  This is necessary because when the diagonal entries of $V_k$ are too close to zero the filter can become numerically unstable.  The constant $\delta$ should be chosen to be small relative to the average value of the $B_k \vec x_k$, and in all our numerical experiments we set $\delta = 0.1$.
Finally, we note that in practice the initial estimates $\vec x_0^+$ and $P_0^+$ are often not available.  However, the effect of these initial estimates on the accuracy of the state estimates decays to zero exponentially as $k \to \infty$, and often $P_0^+$ is simply chosen to be a multiple of the identity matrix.

\section{An SIR model for noncontagious disease in a restricted population}\label{SIRmodel} 

Severe systemic bacterial infection in the neonatal period, neonatal sepsis (NS), accounts for an estimated 680,000 - 750,000 neonatal deaths per year worldwide \cite{Seale2014} - more than childhood deaths from malaria and HIV combined \cite{Seale2013}. The most common brain disorder in childhood is hydrocephalus, and the largest single cause of hydrocephalus in the world is as a sequelae of NS \cite{kulkarni2017}, accounting for an estimated 160,000 yearly cases of postinfectious hydrocephalus (PIH) in infancy \cite{Dewan2018}. The microbial agents responsible for this enormous loss of human life have been poorly characterized \cite{Saha2018}, although next-generation molecular methods show promise to improve the identification of causal agents \cite{Paulson2020}. Both NS and PIH occur disproportionately in the developing world, and most of the PIH cases will die in childhood without adequate treatment, substantially compounding the effective mortality due to NS and its tremendous burdens on societies \cite{Warf2011EBD,ranjeva2018economic}.

We expect a natural application of the PKF will be to SIR modeling. 
Consider a discrete-time SIR model for neonatal sepsis with three classes: $S_k$ is the susceptible population at time $k$, $I_k$ the infected population, and $R_k$ the recovered population.  (Later, in Section \ref{SIRHmodel}, the model will be expanded to include a postinfectious hydrocephalic class.)  Since there are many unmodeled factors which affect the adult population, and the feedback of neonatal infection into the birth rate takes place on a relatively long time scale, we do not include the adult population in the model.  Thus, $S_k, I_k, R_k$ represent neonatal and infant populations.  Since we are modeling neonatal infections, the susceptible and infected classes are neonatal and, $S_k + I_k$ represents the neonatal population.  The recovered class, $R_k$, will track those that recover from sepsis for a period of time that can be chosen by the modeler as will be described below.  The model is summarized in the diagram in Fig.~\ref{SIRdiagram}.

\begin{figure}
    \centering
    \includegraphics[width=.8\linewidth]{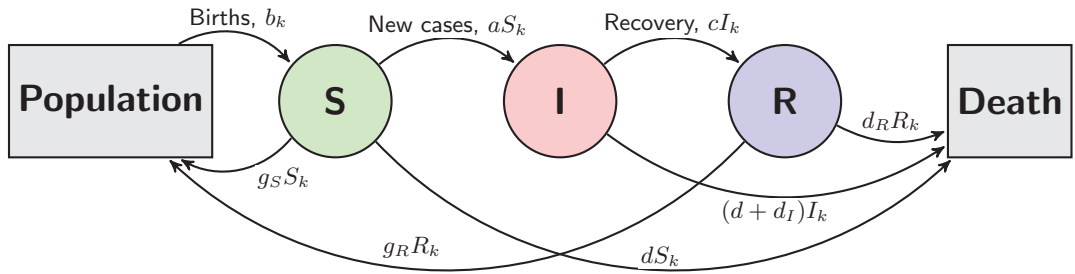}
    \caption{Diagram of the Susceptible-Infected-Recovered (SIR) model for neonatal sepsis.}
    \label{SIRdiagram}
\end{figure}

Modeling only the neonatal/infant populations requires several deviations from the standard SIR model. First, the birth rate is not proportional to any of the model populations, and is instead a forcing, $b_k$, which introduces new population into the susceptible class at each time step.  Moreover, there are now three ways to leave the susceptible class: (1) a neonatal mortality rate $d$, due to factors other than infection (this will affect the two neonatal classes, $S_k$ and $I_k$), (2) an infection rate $a$, which feeds into the infected class, and (3) a `grow-up' rate $g_S$, which signifies no longer being susceptible to neonatal infection. The model is:
\begin{align} 
S_{k+1} &= (1-d-a - g_S)S_k + b_k  \label{S} \\
I_{k+1}  &= (1-d-d_I-c)I_k + aS_k \label{I} \\
R_{k+1} &= (1-d_R - g_R) R_k + cI_k. \label{R}
\end{align}
Notice that the $g_S$ rate removes neonates from the model entirely, so effectively the grow-up rate $g_S$ will control the length of time that we consider to be `neonatal'.  Given a time period $T_S$ for susceptibility, we set $g_S = 1/T_S$, which makes the simplifying assumption that the susceptible population is always equally distributed across different ages.  The grow-up rate $g_R$ controls the length of time that infants in the recovered class are tracked, so that $g_R=1/T_R$ where $T_R$ is the amount of time we track the recovered class.  The two parameters $g_S,g_R$ control the two time scales for susceptibility and recovery (which will become more significant later when we consider the longer time-scale possibility of developing hydrocephalus), and $c$ is the rate of recovery from infection.

With the state variable $\vec x_k = (S_k,I_k,R_k)^\top$, the matrix form of the evolution is
\[ \vec x_{k+1} = F\vec x_k + \vec b_k \]
where
\[ F = \left( \begin{array}{ccc} 1-d-a-g_S & 0 & 0 \\ a & 1-d-d_I-c & 0 \\ 0 & c & 1-d_R-g_R \end{array} \right) \hspace{30pt} \vec b_k = \left( \begin{array}{c} b_k \\ 0 \\ 0 \end{array} \right) \]
If the birth rate is assumed to be constant $b_k \equiv b$, the steady state populations can be explicitly solved.  Setting $S_{\infty} \equiv S_{k+1} = S_k$ in susceptible population in \eqref{S} we can solve for $S_{\infty} = \frac{b}{d+a+g_S}$.  Substituting this for $S_{\infty}=S_k$ in \eqref{I} and setting $I_{\infty}\equiv I_{k+1} = I_k$ in \eqref{I} we can solve for $I_{\infty}$ and similarly we can solve for $R_{\infty}$ giving steady state solutions,
 \begin{align} 
S_{\infty} &= \frac{b}{d+a+g_S} \label{Sinf} \\
I_{\infty}  &= \frac{ab}{(d+d_I+c)(d+a+g_S)} \label{Iinf} \\
R_{\infty} &= \frac{abc}{(d_R + g_R)(d+d_I+c)(d+a+g_S)} \label{Rinf}
\end{align}
These steady state solutions have important public health implications on the time scale where the birth rate is approximately constant.  First, $S_\infty$ determines the scale of public health improvement if susceptibility can be reduced (prevention).  Second, $I_\infty$ determines the resources needed to meet the average infection burden.

\subsection{Case Study: Neonatal sepsis in Uganda}\label{Uganda}

 Publicly available statistics can be used to approximate parameters for NS in Uganda during the time frame 2014-2015.  We consider a discrete time step (the time between steps $k$ and $k+1$) of one day and a neonatal period of $T_S = 28$ days.  From \cite{unicef} we find a 2015 birth rate of $1665000$ per year for Uganda, which for a daily model yields $b\approx 4562$.  Using 2014 statistics for neonatal sepsis in sub-Saharan Africa, we find a neonatal mortality rate of 29 per 1000 with 17\%-29\% attributable to sepsis \cite{ranjeva2018economic}.  For simplicity we assume that the neonatal mortality rate of 29 per 1000 can be divided into 7 attributable to sepsis ($\approx 23\%$ of neonatal mortality, the midpoint of the 17\%-29\% range) and 22 attributable to other causes.  

Since we assume the neonatal period is $T_S$ days, we convert the neonatal mortality rate due to factors other than sepsis into a daily rate by setting $d = 22/1000/T_S$.  The daily neonatal mortality rate due to sepsis is then $7/1000/T_S$, however this is \emph{not} $d_I$ because the $d_I$ variable applies only to the infected class (whereas $d$ applies to both the susceptible and infected classes, and thus is a rate for the entire neonatal population).  That is, $d_I$ represents the daily rate of mortality due to sepsis as a percentage of the population that has sepsis (rather than $7/1000/T_S$ which is the daily rate as a percentage of the entire population).  So before we can determine $d_I$, we first must determine the infection rate $a$.  Infection rate estimates can vary widely based on methodology (\cite{ranjeva2018economic} quotes a range of 5.5 - 170 per 1000 live births). Based on the estimate of one of the authors (SJS) who is a physician conducting medical research on these infants in Uganda, there is a range of 30 - 60 per 1000 live births in that nation.  Conservatively assuming 30 per 1000, we take $a = 30/1000/T_S$ as a daily rate of infection. Now the constant $d_I$ can be determined. We stated above that 7 of the 1000 will die from sepsis, meaning that 7 of the 30 who get sepsis will die from it.  Thus, we find that $d_I = 7/30/T_S$ is the daily rate of death due to sepsis among those that already have sepsis.  This immediately gives us the recovery rate: 7 of the 30 who get sepsis will die from sepsis, and $30(22/1000)$ will die from non-sepsis causes. The remaining $30-7-30(22/1000) = 22.34$ will recover, establishing the recovery rate $c = 22.34/30/T_S$.  Note that 
\[ c = \frac{30 - 7 - 30(22/1000)}{30 T_S} = \frac{1}{T_S} - \frac{7}{30 T_S} - \frac{22/1000}{T_S} = g_S - d_I - d \]
so in fact $c$ is chosen to insure that all of the infected classes leave within the neonatal day period.

The infant mortality rate $m_2$, which covers mortality of the first year after birth, infancy or $T_i$, can also be derived from data. Consider a tracking time for the recovered population of this first year minus the neonatal period, $T_R = T_i-T_S$ (we assume that the recovered population is entirely outside the 28 day neonatal period).  For the death rate in the recovered class we start with the infant mortality rate of 77 per 1000 (in the first year \cite{ranjeva2018economic}) and subtract the 29 per 1000 neonatal mortality rate to find $d_R = 48/1000/T_R$.

\begin{figure}
\subfigure[]{ \boxed{
\begin{minipage}{.51\linewidth}
\vspace{.6em}
\begin{itemize}\setlength\itemsep{.6em}
\item The neonatal time period, $T_S$ (28 days)
\item The infant time period, $T_i$ (365 days)
\item Daily birth rate, $b$ (4562) \cite{unicef}
\item Neonatal mortality rate, $m_1$ (0.0029) \cite{ranjeva2018economic}
\item Percentage of neonatal mortality due to sepsis, $s$ (0.23) \cite{ranjeva2018economic}
\item Infection rate, $a$ (0.0030) \cite{ranjeva2018economic}
\item Infant mortality rate, $m_2$ (0.0077) \cite{ranjeva2018economic}
\end{itemize}
\vspace{.4em}
\end{minipage}}}\hspace{.005\linewidth}
\subfigure[]{\begin{minipage}{.45\linewidth}
    \centering
    \includegraphics[width=.99\linewidth]{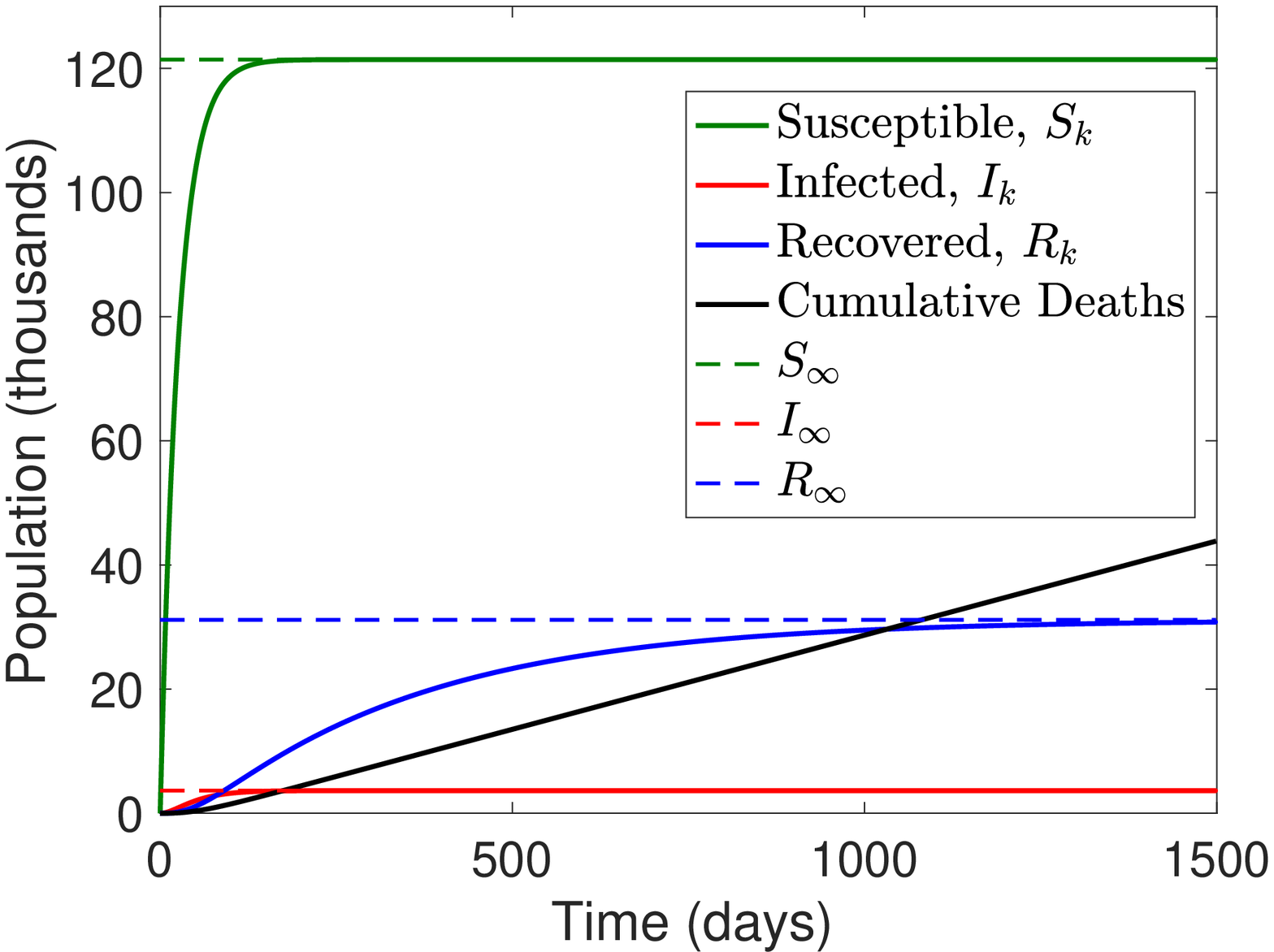}
\end{minipage}}
    \caption{(a) Summary of the inputs to the model for infant sepsis, with values used in parentheses, the remaining parameters are computed using equations \eqref{derivedparams}. (b) Simulation of the model for infant sepsis in Uganda assuming constant birth rate and starting from the zero initial condition, $(S_0,I_0,R_0) = (0,0,0)$.}
    \label{SIRfig}
\end{figure}

We summarize the inputs to the model in Fig.~\ref{SIRfig}
then  compute the parameters $d,d_I,d_R,c,T_R,g_S,g_R$ by
\begin{align}\label{derivedparams}
g_S &= \frac{1}{T_S} &\hspace{20pt} g_R &=\frac{1}{T_R} \nonumber \\
d &= \frac{(1-s)m_1}{T_S} &\hspace{20pt} d_I &=\frac{s m_1}{a T_S}  \\
c &= g_S - d - d_I &\hspace{20pt} d_R &=\frac{m_2 - m_1}{T_R} \nonumber
\end{align}
where $s$ is the fraction of neonatal mortality due to sepsis. The steady state values for the model with these parameters are $S_\infty=121422$, $I_\infty = 3643$, and $R_\infty = 31152$.  We note that the steady state number of infected shows consistency with reported values \cite{unicef}.  
The recovered class is now susceptible to developing PIH.

\section{SIRH: Modeling the hydrocephalic population}\label{SIRHmodel}

We now turn to a model that specifically links neonatal infection and postinfectious hydrocephalus (PIH).  The essential idea is that those that have recovered from sepsis are now susceptible to developing hydrocephalus. The constant $h$ represents the rate at which recovered infants move from the recovered class $R_k$ to a new hydrocephalic class $H_k$, leading to the equations
\begin{align} \label{SIRH}
S_{k+1} &= (1-d-a - g_S)S_k + b_k  \nonumber \\
I_{k+1}  &= (1-d-d_I-c)I_k + aS_k \\
R_{k+1} &= (1-d_R - g_R - h) R_k + cI_k  \nonumber \\
H_{k+1} &= (1- d_R - d_H)H_k + h R_k.  \nonumber
\end{align}
The SIRH system is summarized in the diagram in Fig.~\ref{SIRHdiagram}.

\begin{figure}
    \centering
    \includegraphics[width=.99\linewidth]{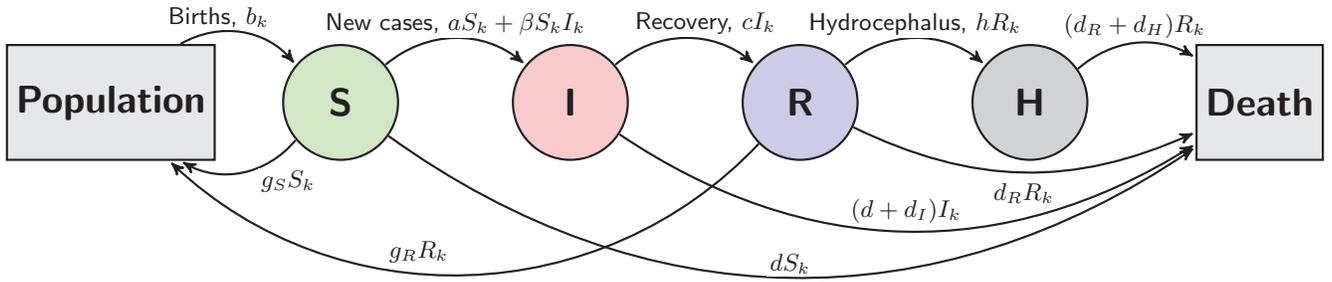}
    \caption{Diagram of the Susceptible-Infected-Recovered-Hydrocephalic (SIRH) model for neonatal sepsis and hydrocephalus. Note that in Section \ref{SIRHmodel} we consider the linear model with $\beta=0$.}
    \label{SIRHdiagram}
\end{figure}
The hydrocephalic class is subject to an additional mortality rate due to hydrocephalus, $d_H$, which requires recalibrating the recovered rate, $d_R$, so that it does not include deaths due to hydrocephalus.  We set $d_R = (m_2 - m_1 - p d_H T_R)/T_R$ where $m_2$ is the infant mortality rate, $m_1$ is the neonatal mortality rate, $p$ is the rate of PIH in the total population under consideration (discussed in Section \ref{PIHstudy} below), and $d_H T_R$ is the rate of death of those who develop PIH during infancy ($d_H$ is the daily rate and $T_R$ is the remainder of the infancy period).  Finally, we note that the steady state value of the recovered class changes from the SIR model due to the rate $h$, and the new steady state along with the hydrocephalic steady state are given by
\begin{align}\label{hinf} 
R_{\infty} &= \frac{abc}{(d_R+g_R+h)(d+d_I+c)(d+a+g_S)} \\
H_{\infty} &= \frac{h R_{\infty}}{d_R + d_H} = \frac{abch}{(d_R+g_R+h)(d+d_I+c)(d+a+g_S)(d_R+d_H)} 
\end{align}
We now return to our case study of modeling PIH in Uganda.

\subsection{Case Study: Infant hydrocephalus in Uganda}\label{PIHstudy}

The first parameter to consider is $h$, the rate of developing postinfectious hydrocephalus (PIH).  In \cite{ranjeva2018economic} it is reported that the incidence of PIH is 3-5 per 1000 live births.  We will take the low estimate of 3 per 1000 setting $p=3/1000$, since it will be shown to be more consistent with other statistics below.  Recall that above we estimated that for 1000 live births there are $30$ cases of sepsis, and $22.34$ of those recover.  Since only recovered sepsis cases can develop PIH, this implies a rate of developing hydrocephalus of \[ h = 3/22.34/T_R.\]   The death rate due to hydrocephalus is highly dependent upon treatment.  The untreated death rate is estimated at 50\%, while treatment can reduce this to 25\%.  We assume an overall death rate of 33\% \cite{warf2011} and we set \[ d_H = 1/3/T_R. \]  Finally, we recalibrate the death rate for those recovering from sepsis by removing the deaths due to hydrocephalus (since those are accounted for in the $H_k$ variable).  So we set 
\[ d_R = \frac{m_2 - m_1 - p\, d_H T_R}{T_R} = \frac{.0077 - .0029 -  .0003 \frac{1}{3}}{T_R} = \frac{.0047}{T_R} \]
The results shown in Fig.~\ref{SIRHfig} predict a steady state of approximately $10000$ ongoing cases of PIH with an annual PIH incidence of approximately 4000 per year ($365*h*R_{\infty}$), and  annual deaths  due to PIH of approximately 3300, consistent with existing estimates \cite{Warf2011EBD}. 

\begin{figure}
    \centering
    \subfigure[]{\includegraphics[width=.4\linewidth]{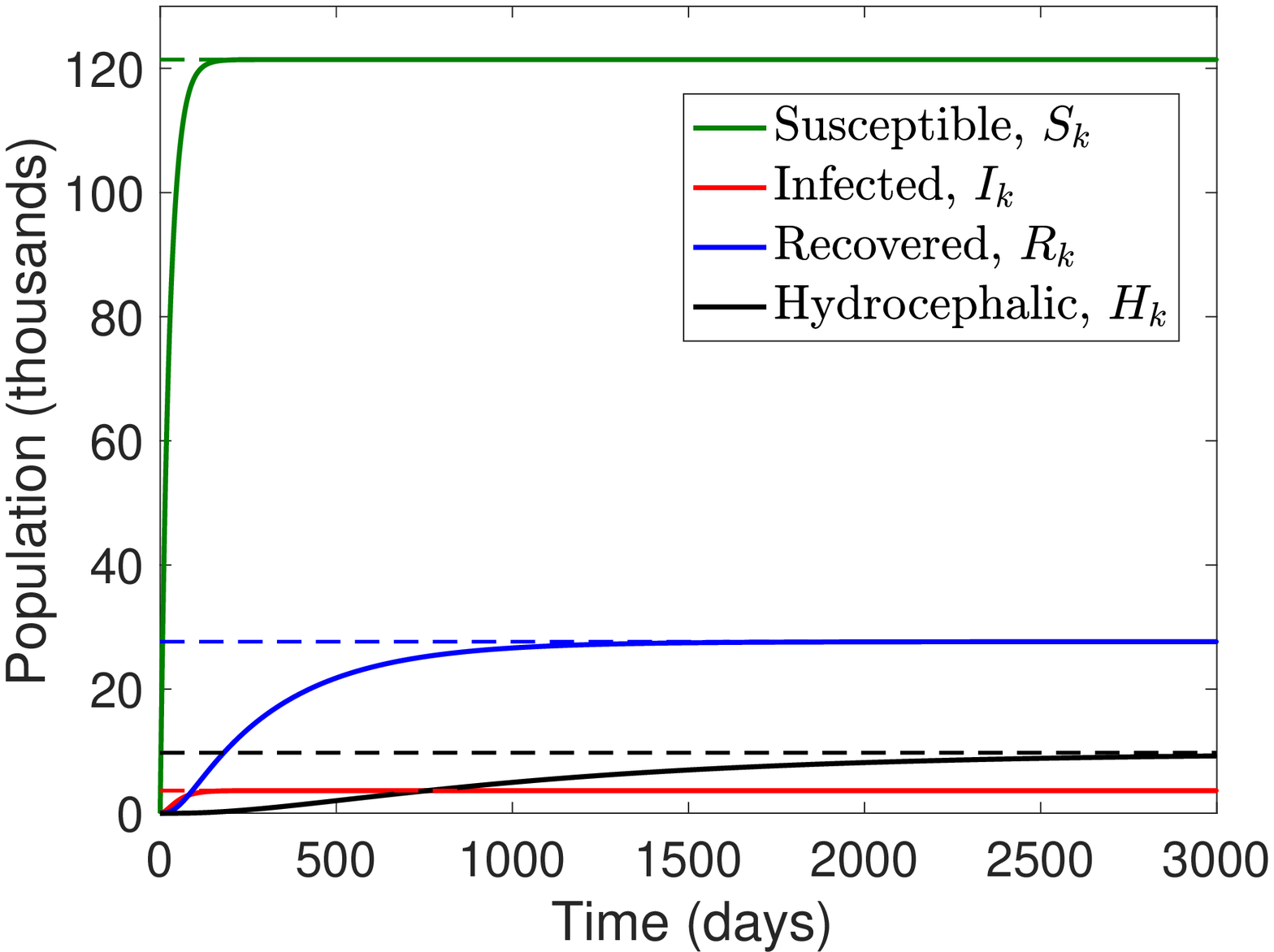}}\subfigure[]{\includegraphics[width=.4\linewidth]{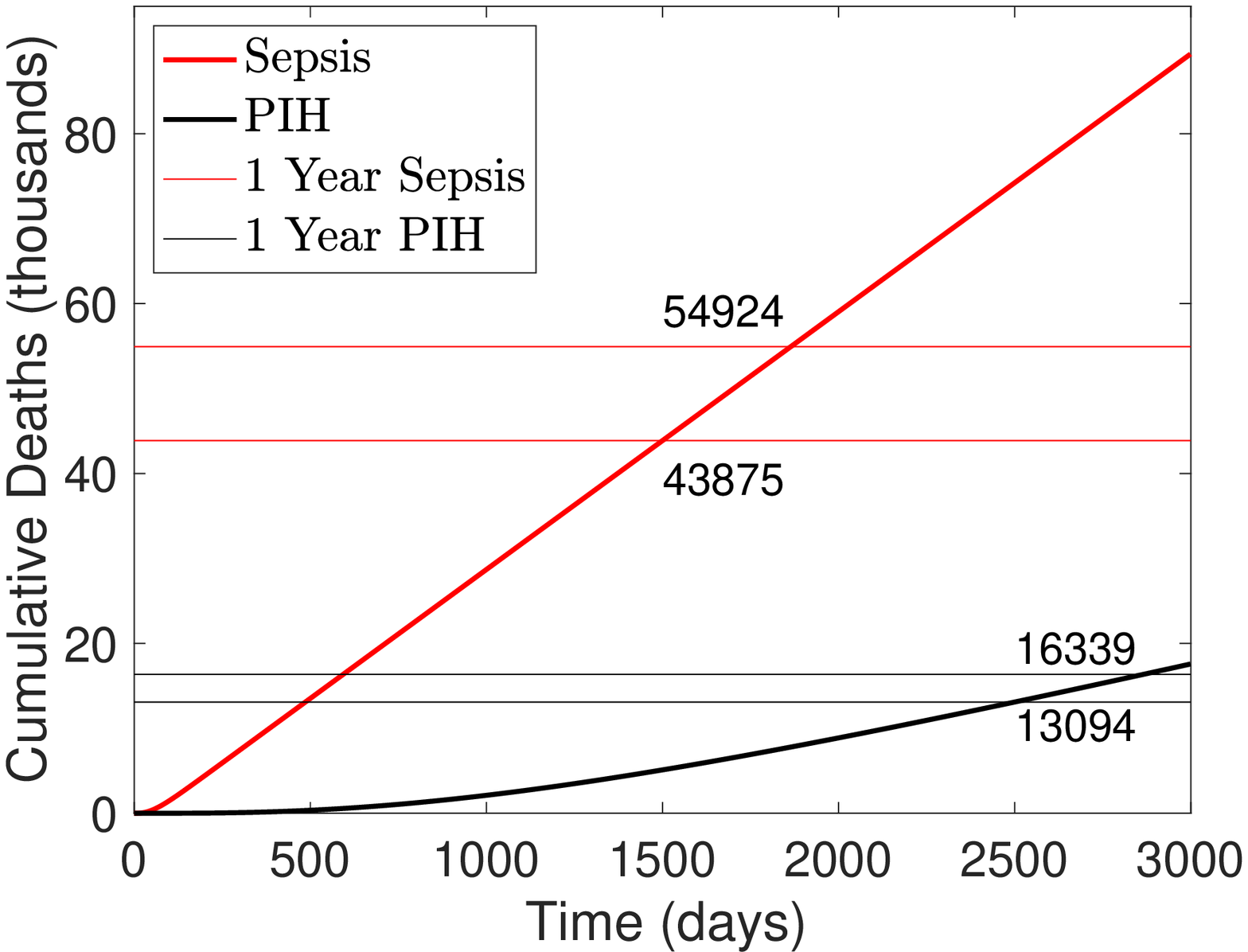}}
    \caption{(a) Simulation of the SIRH model for Uganda starting from the zero initial condition. (b) Plot of the cumulative deaths from sepsis and hydrocephalus in the simulation. The horizontal lines are spaced so that their intersections with the curves are 365 days apart and indicated the cumulative deaths at times one year apart.  The model predicts approximately 11000 annual deaths due to sepsis and approximately 3300 annual deaths due to PIH.}
    \label{SIRHfig}
\end{figure}

\subsection{PKF Simulations}

\begin{figure}
    \centering
    \subfigure[]{\includegraphics[width=.40\linewidth]{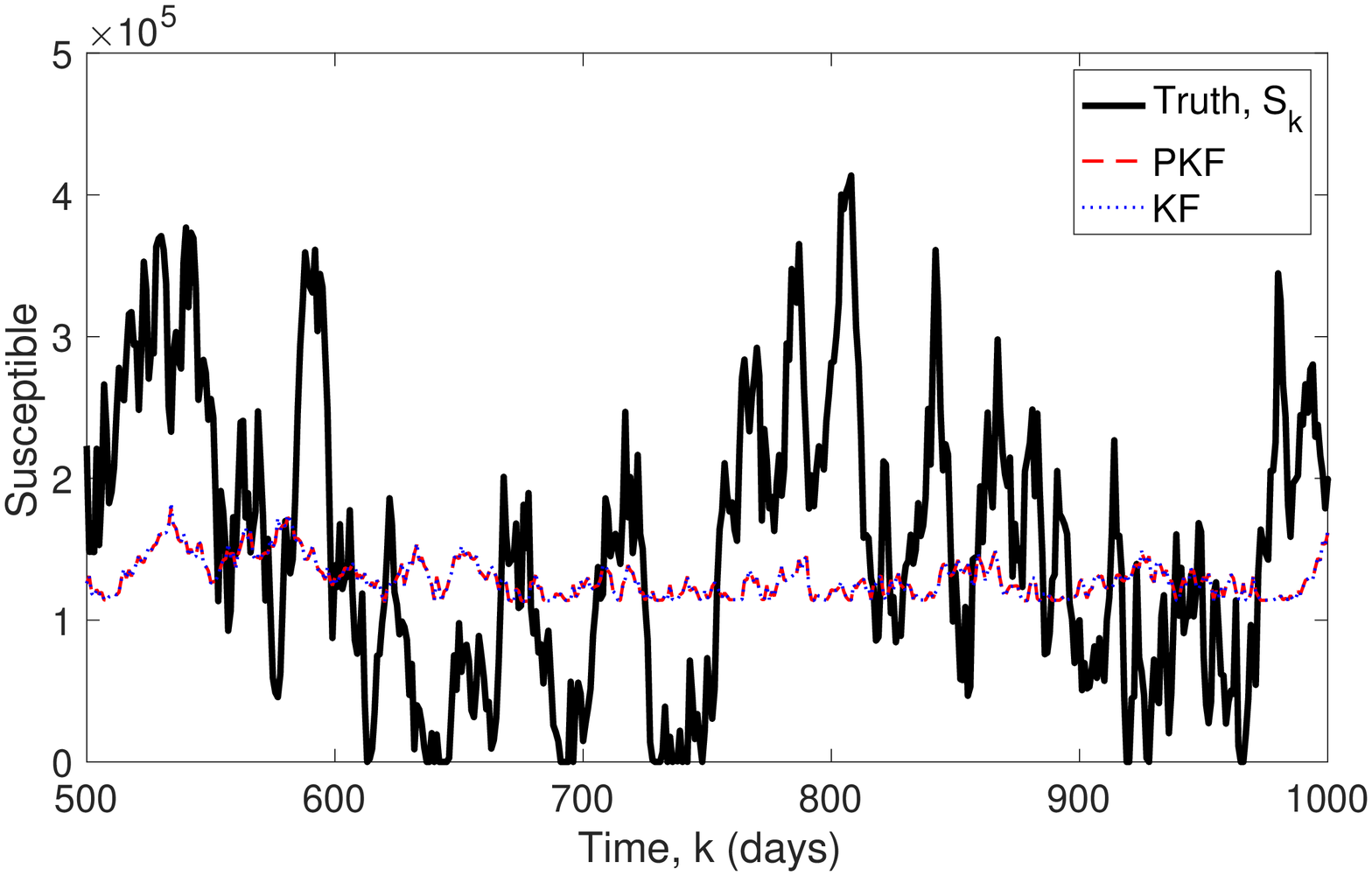}}
    \subfigure[]{\includegraphics[width=.40\linewidth]{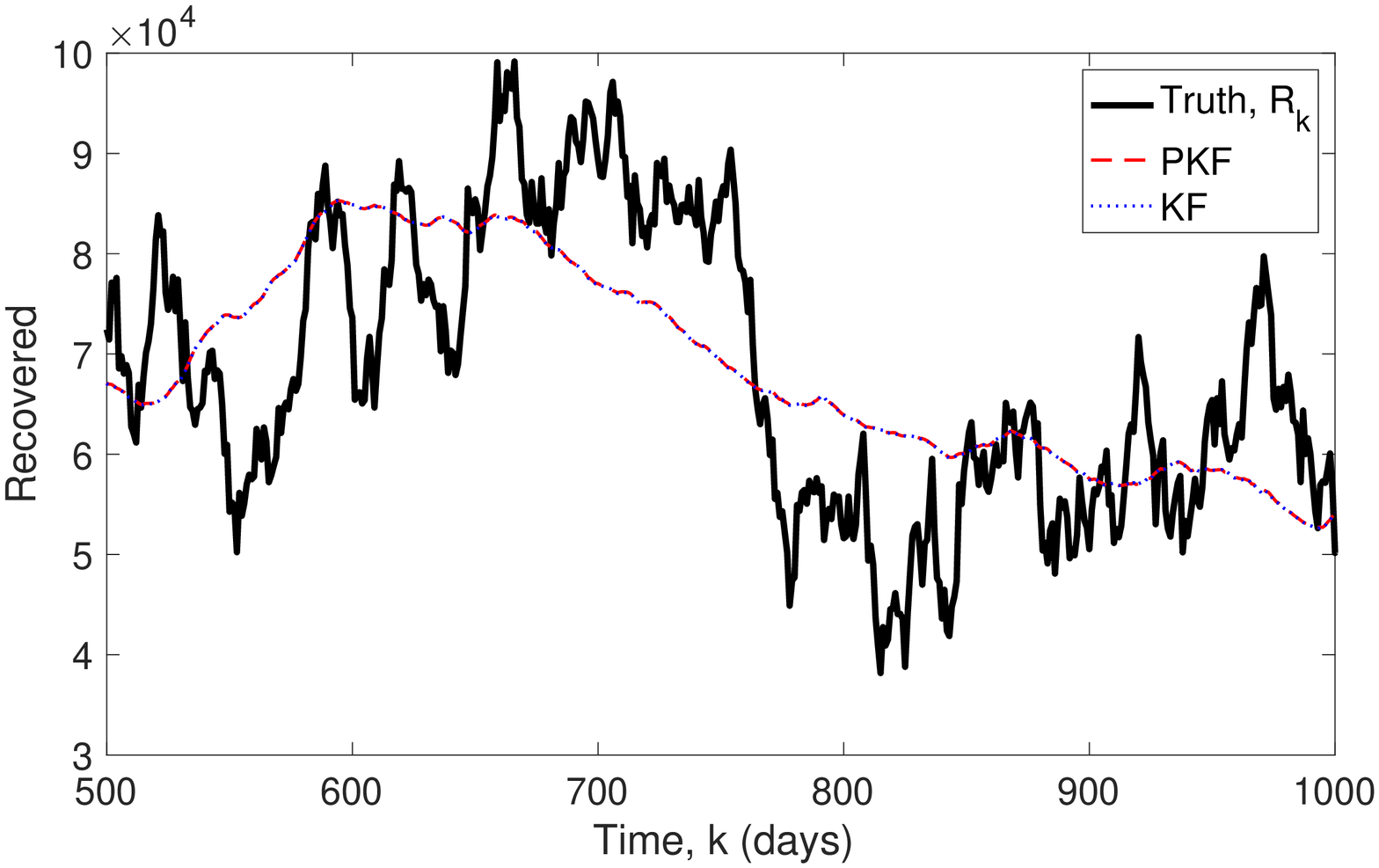}} \\
    \subfigure[]{\includegraphics[width=.40\linewidth]{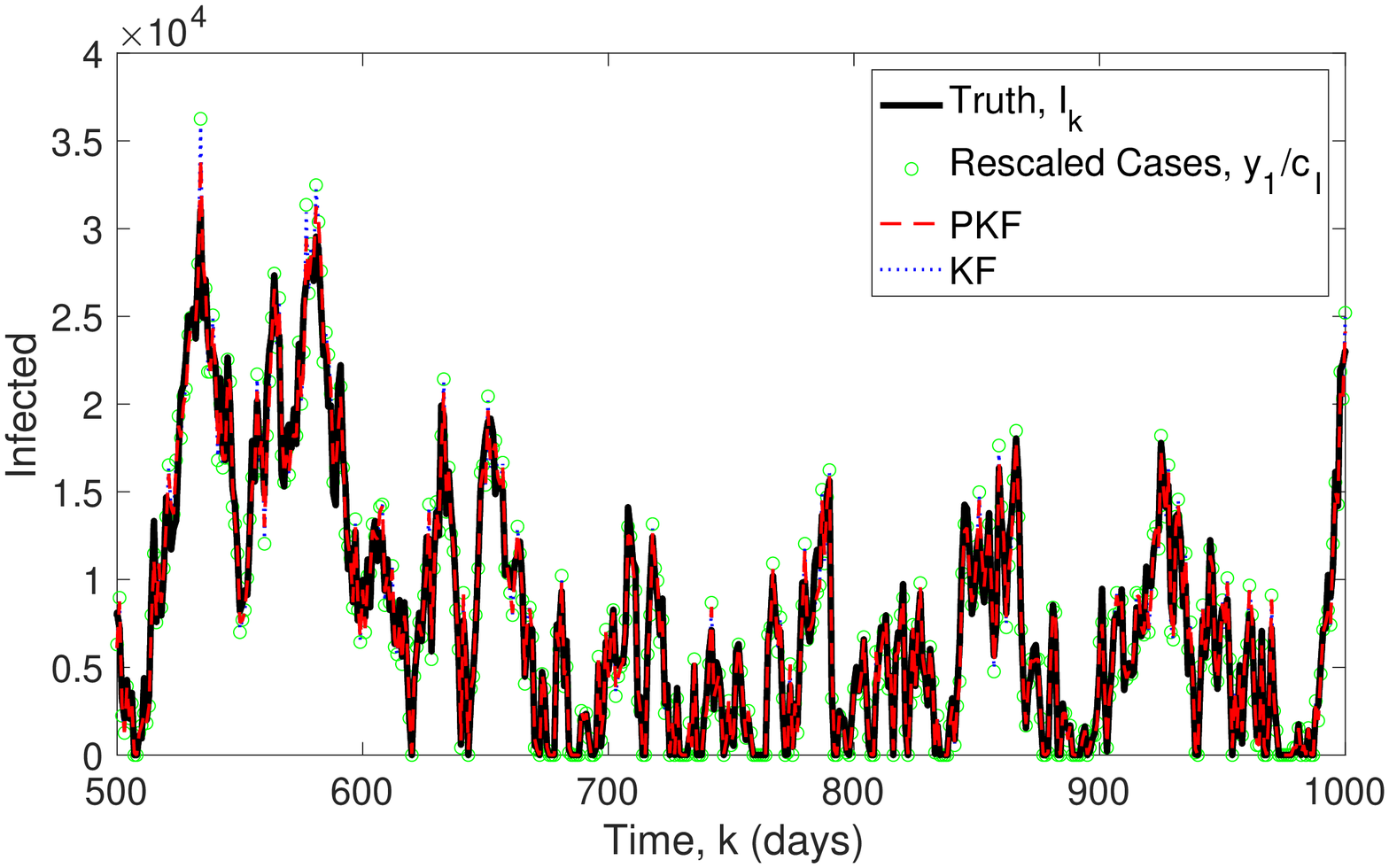}}
    \subfigure[]{\includegraphics[width=.40\linewidth]{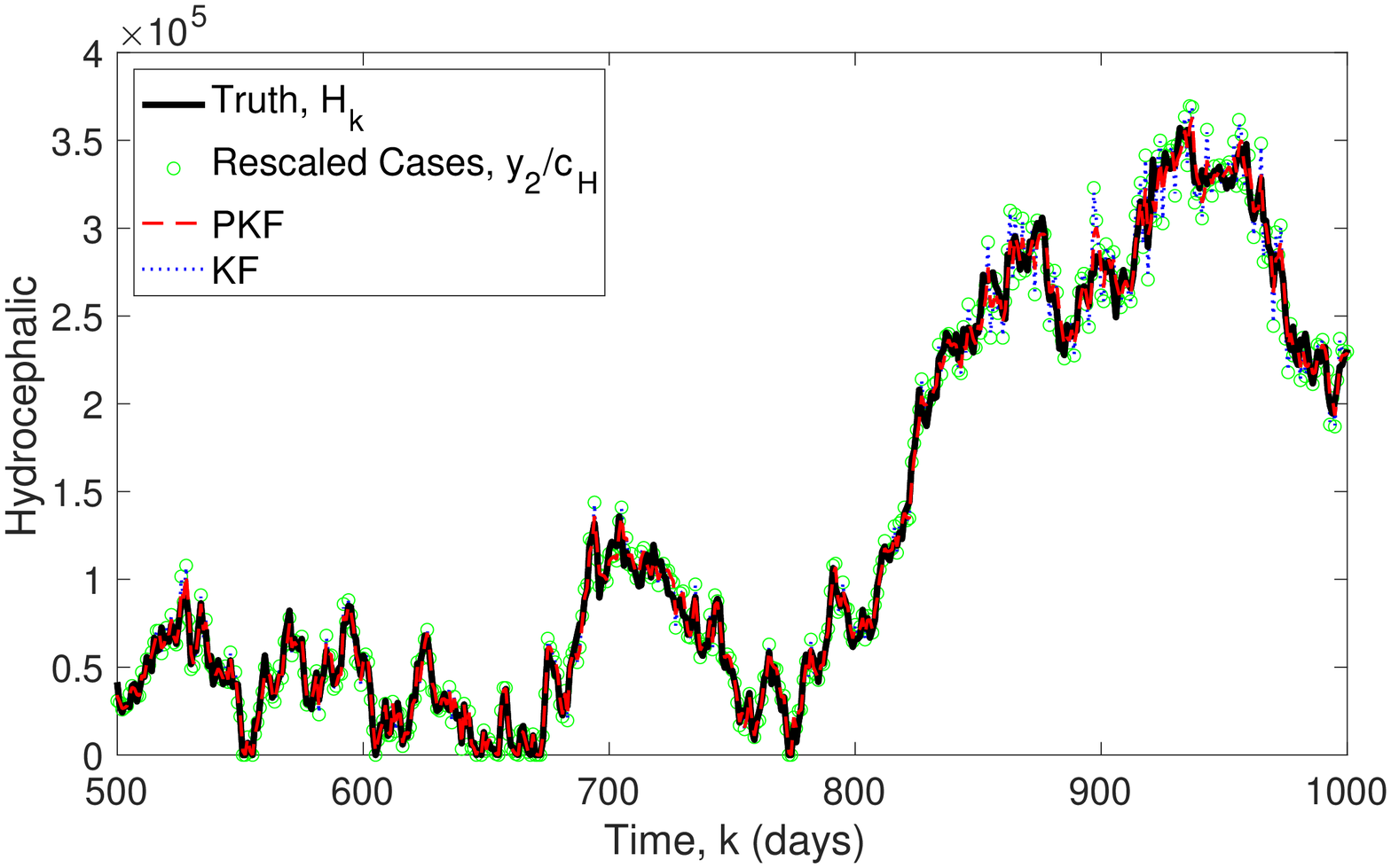}}\\
    \subfigure[]{\includegraphics[width=.40\linewidth]{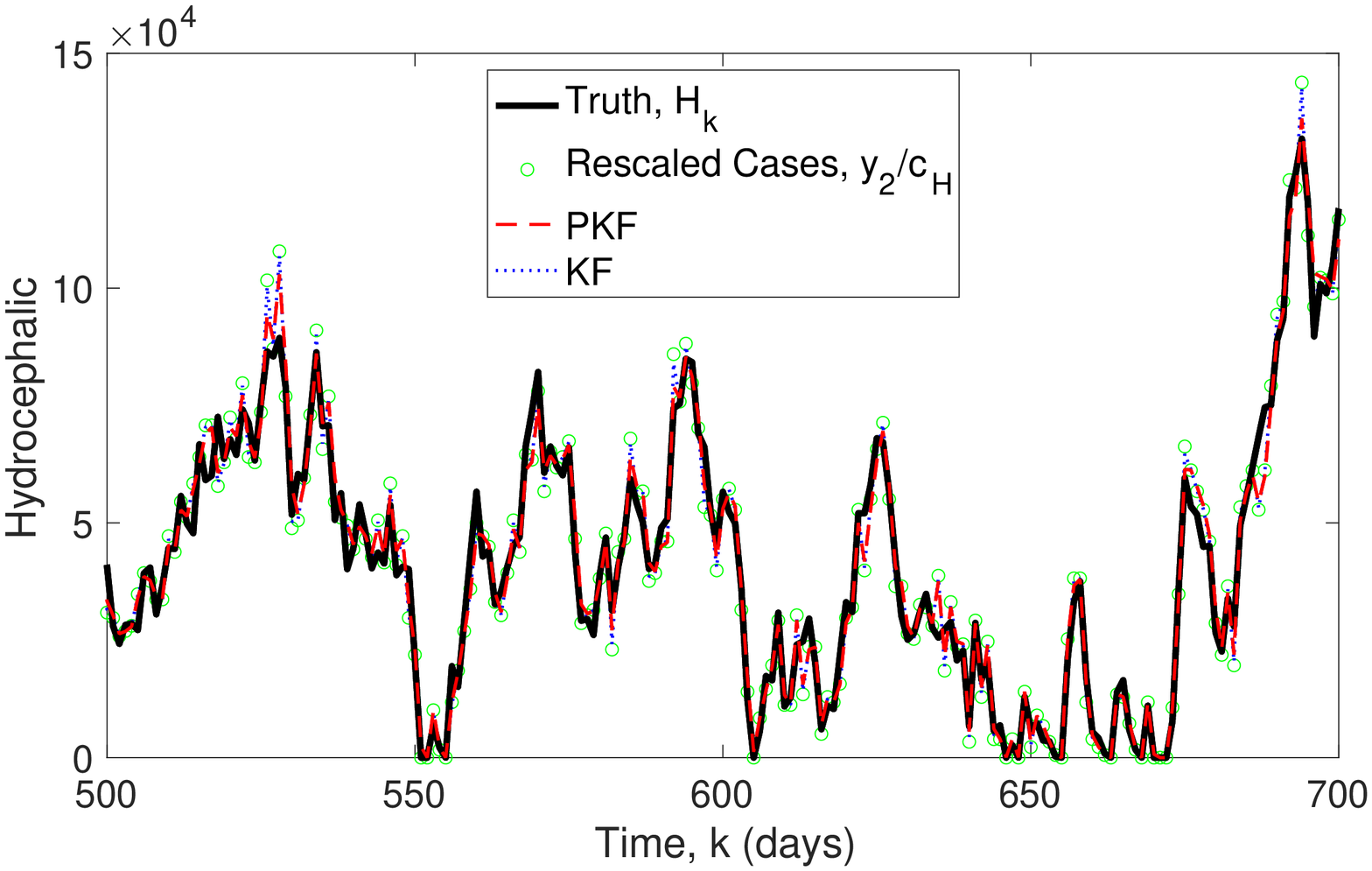}}
    \subfigure[]{\includegraphics[width=.40\linewidth]{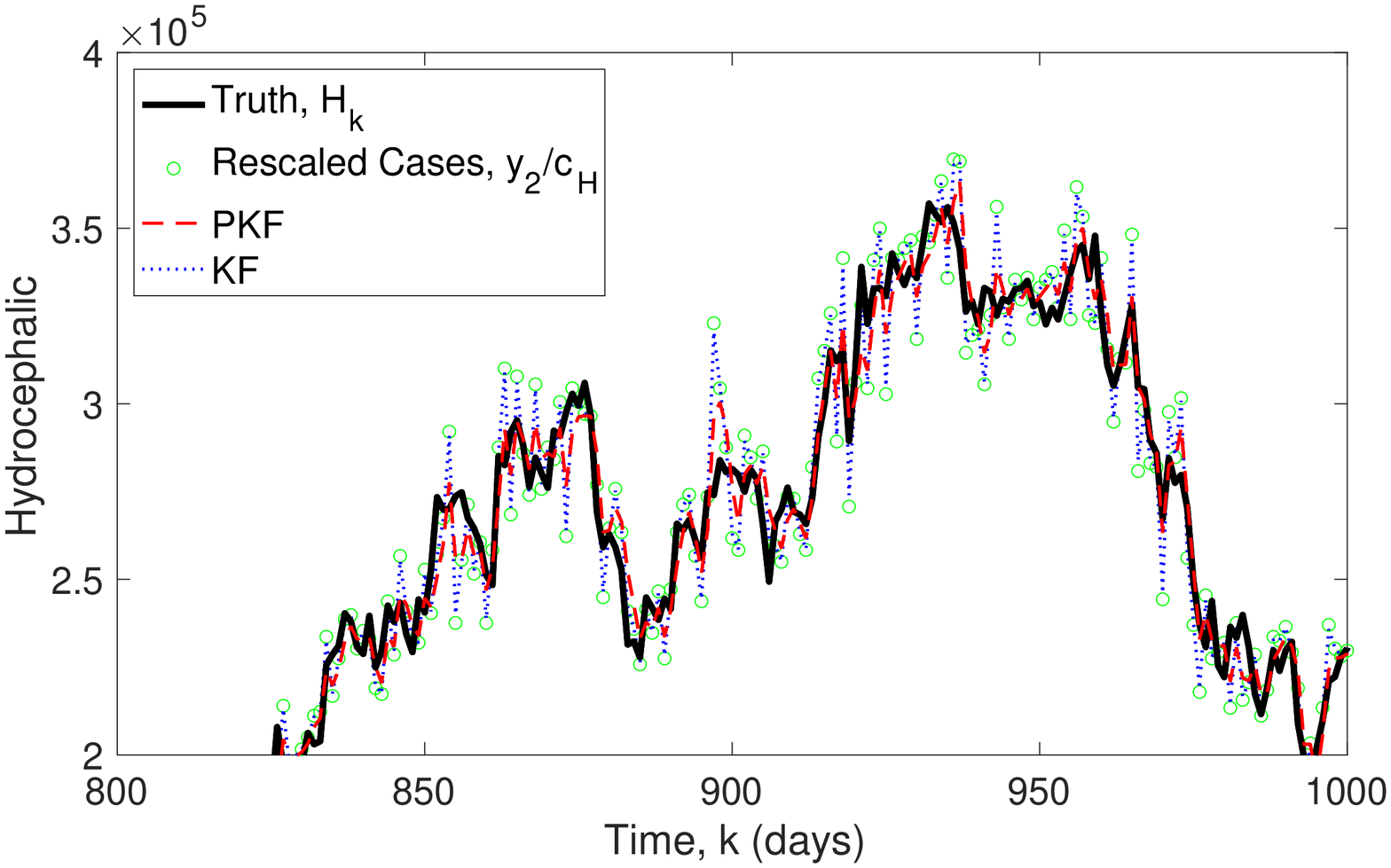}}
    \subfigure[]{\includegraphics[width=.40\linewidth]{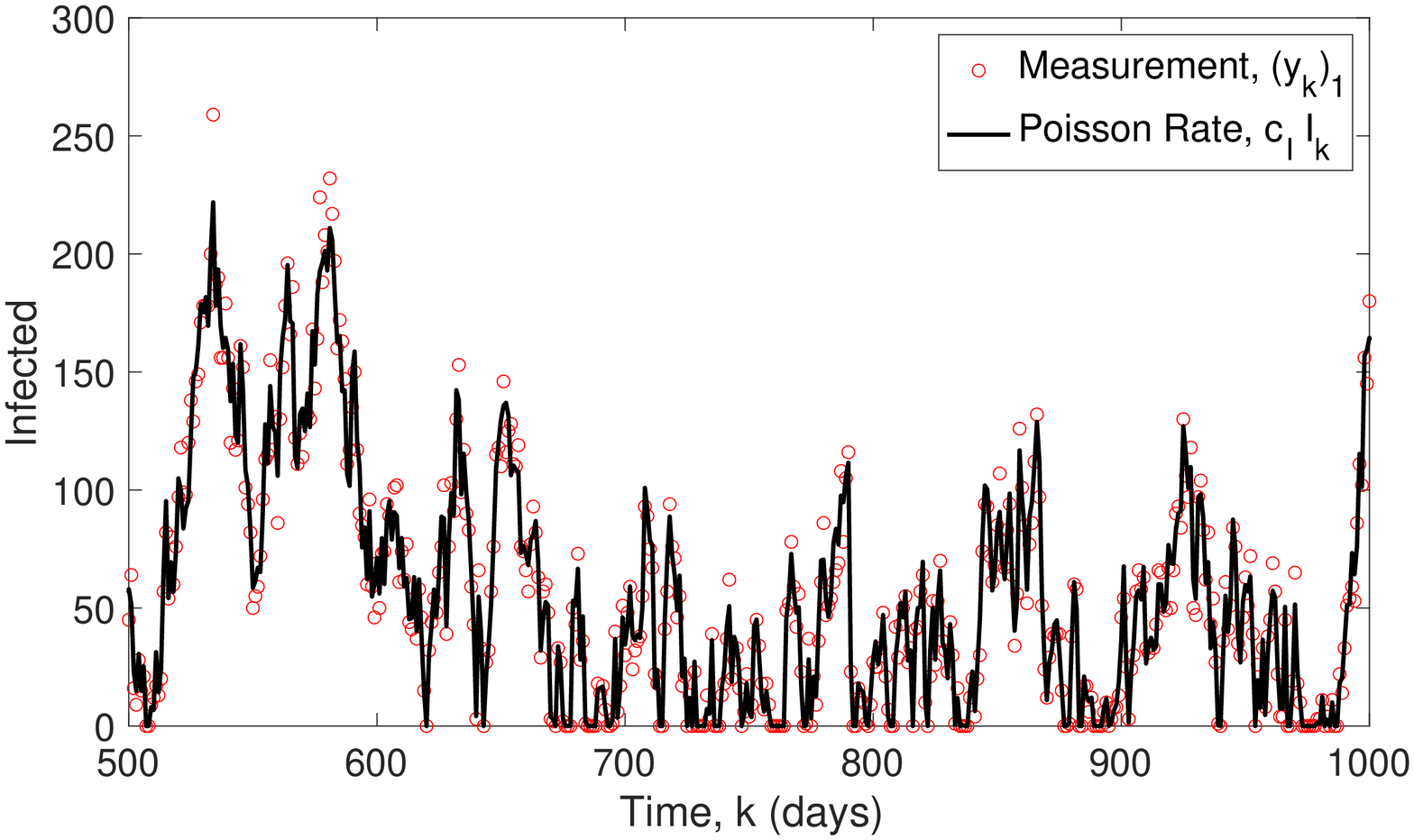}}
    \subfigure[]{\includegraphics[width=.40\linewidth]{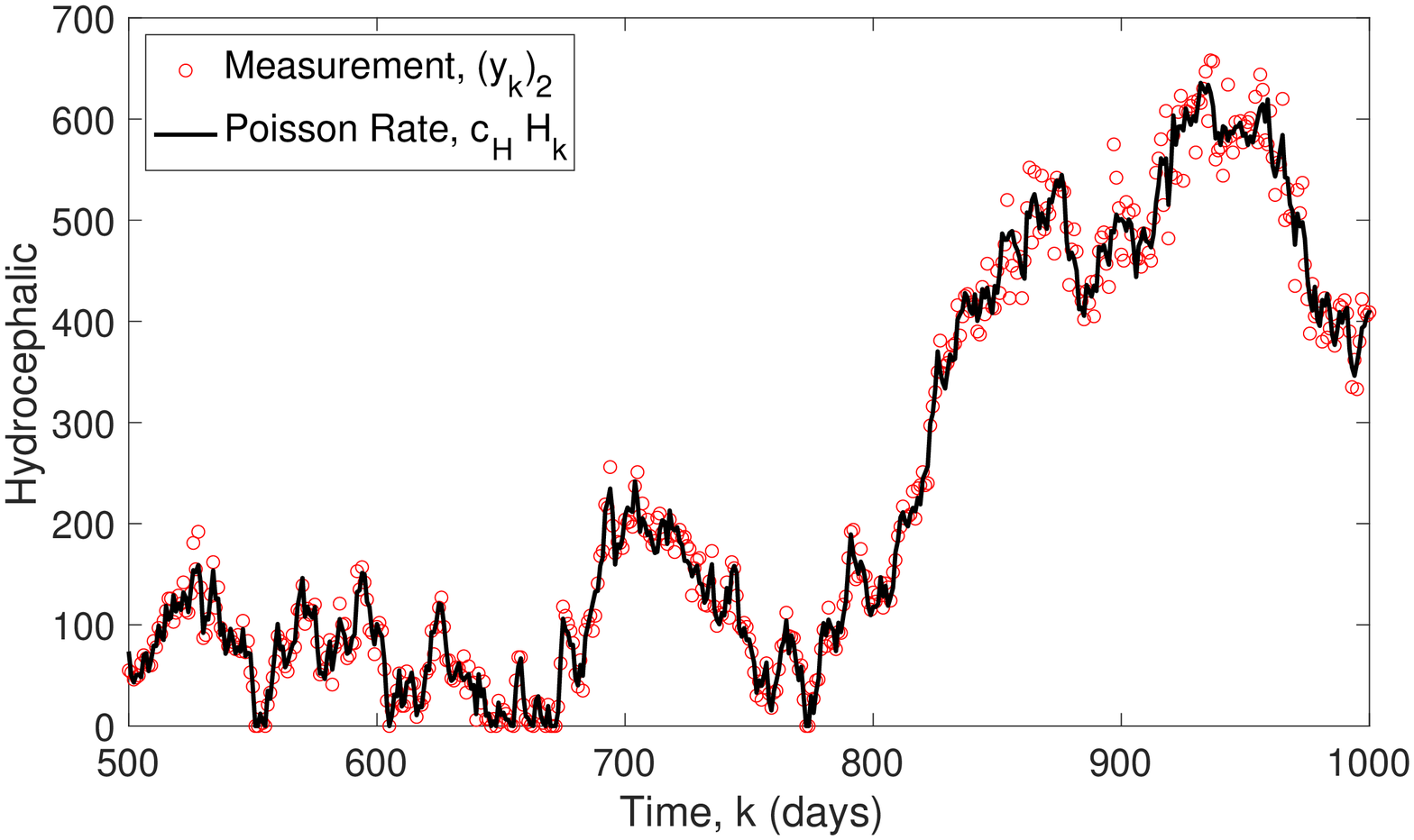}}\vspace{-4pt} 
    \caption{Comparison of the PKF (optimal variable gain) and the Kalman filter (optimal fixed gain) for the SIRH model with Poisson observations of the infected and hydrocephalic populations. (a)-(d) compare the true S, I, R and H values (black) to the PKF (red, dashed) and Kalman filter (blue, dotted) estimates.  Infected and hydrocephalic also show the observations (green, circles) rescaled by dividing by the constants $c_I, c_H$ respectively. (e)-(f) are expanded versions of (d), enlarged to show detail. When the number of cases is large, the KF estimate of H is very close to the observations, whereas the PKF adjusts to the larger observation variance and produces better estimates. (g)-(h) show the Poisson rates (black) of I and H and the observed case numbers (red, circles) from the Poisson distribution.}
    \label{estim1Fig}
\end{figure}

Using the SIRH model described in~Section~\ref{SIRHmodel}, we evaluate the performance of the PKF when the observations follow a random Poisson distribution with known rates $\lambda_{k,I} = c_I I_k$ and $\lambda_{k,H} = c_H H_k$, which represent the number of infants with sepsis and the number of infants with hydrocephalus that show up at the hospital. Our observations can be written as
\[ y_k = \left( \begin{array}{c} y_{k,1} \\ y_{k,2} \end{array} \right) =  \mbox{Poisson}\left( \begin{array}{c} \lambda_{k,I} \\ \lambda_{k,H} \end{array} \right), \quad \mbox{where} \, \left( \begin{array}{c} \lambda_{k,I} \\ \lambda_{k,H} \end{array} \right) =  B x_k = \left( \begin{array}{cccc} 0 & c_I & 0 & 0 \\ 0 & 0 & 0 & c_H \end{array} \right)\left( \begin{array}{c} S_k \\ I_k \\ R_k \\ H_k \end{array} \right). \]
and we start the system at the equilibrium values.

The simulation in Fig.~\ref{estim1Fig} was run with system noise $W = \textup{diag}(144,1,1,10) \times 10^7$ and the constant daily birth rate $b = 4562$, while setting the sepsis and hydrocephalus proportionality constants as $c_I = 0.2/T_S$ and $c_H = 0.6/T_R$ respectively.  The idea behind these values is that if 20\% of total sepsis cases seek care over the entire $T_S$ period of sepsis susceptibility, then the daily rate of arrivals would be $0.2/T_S$ multiplied by the number of true sepsis case (20\% was chosen purely for purposes of simulation).  In Fig.~\ref{estim1Fig} we see that the PKF (red, dashed curves) gave good estimates of the observed variables, namely the, infected and hydrocephalic populations.  The PKF also obtains information about the unobserved variables, namely, the susceptible and recovered populations at least on a slow time scale, however the fast time scale information about the unobserved variables seems limited.   

Fig.~\ref{estim1Fig} also compares the PKF to a Kalman filter (blue, dotted curves) which was given the optimal fixed observation noise covariance matrix, $V_{\textup{const}} = \textup{diag}(B \overline x)$ where $\overline x$ is the time average of the state variables.  The disadvantage of the fixed gain is that when the number of infected or hydrocephalic is large the variance of the observations will be larger than the average value.  This means that the Kalman filter will underestimate the observation variance and use an oversized gain.  This is shown in Fig.~\ref{estim1Fig} where the Kalman filter estimates closely follows the observations when the number of infected or hydrocephalic are large.  The PKF dynamically adjusts the observation covariance matrix based on the state estimate in order to prevent this.  This is further shown in Fig.~\ref{estim1Fig2} which compares the root mean squared error (RMSE) for the PKF and the Kalman filter for various levels of system noise. Fig.~\ref{estim1Fig2} also compares the PKF, which uses the filter estimate to determine $V_k$, to an oracle PKF which uses the true state for $V_k$ and we see that their performance is almost identical even at high noise levels.  In this case (with a linear model), the Kalman filter and PKF have similar performance for the unobserved variables, which seems to indicate that they are relying more on the stability of the model rather than correlations with the observed variables. Of course this is reliant on using the optimal $V_{\rm const}$ matrix in the EKF.  Moreover, in the context of disease surveillance filtering the observed variables is critical to account for over/under reporting in producing a clean data set, and for these variables the PKF has a significant advantage.

Finally, we note that the PKF has the largest advantage at high noise levels.  This is because the SIRH system is a stable linear system, so that noise is the only unstable component of the dynamics.  In the absence of noise, no filter would be necessary since all trajectories would converge to the equilibrium regardless of observations.  This suggests that a generalized PKF (such as the Extended PKF considered below) would have an advantage for nonlinear dynamics with unstable directions even in the absence of system noise.

\begin{figure}
    \centering
    \subfigure[]{\includegraphics[width=.49\linewidth]{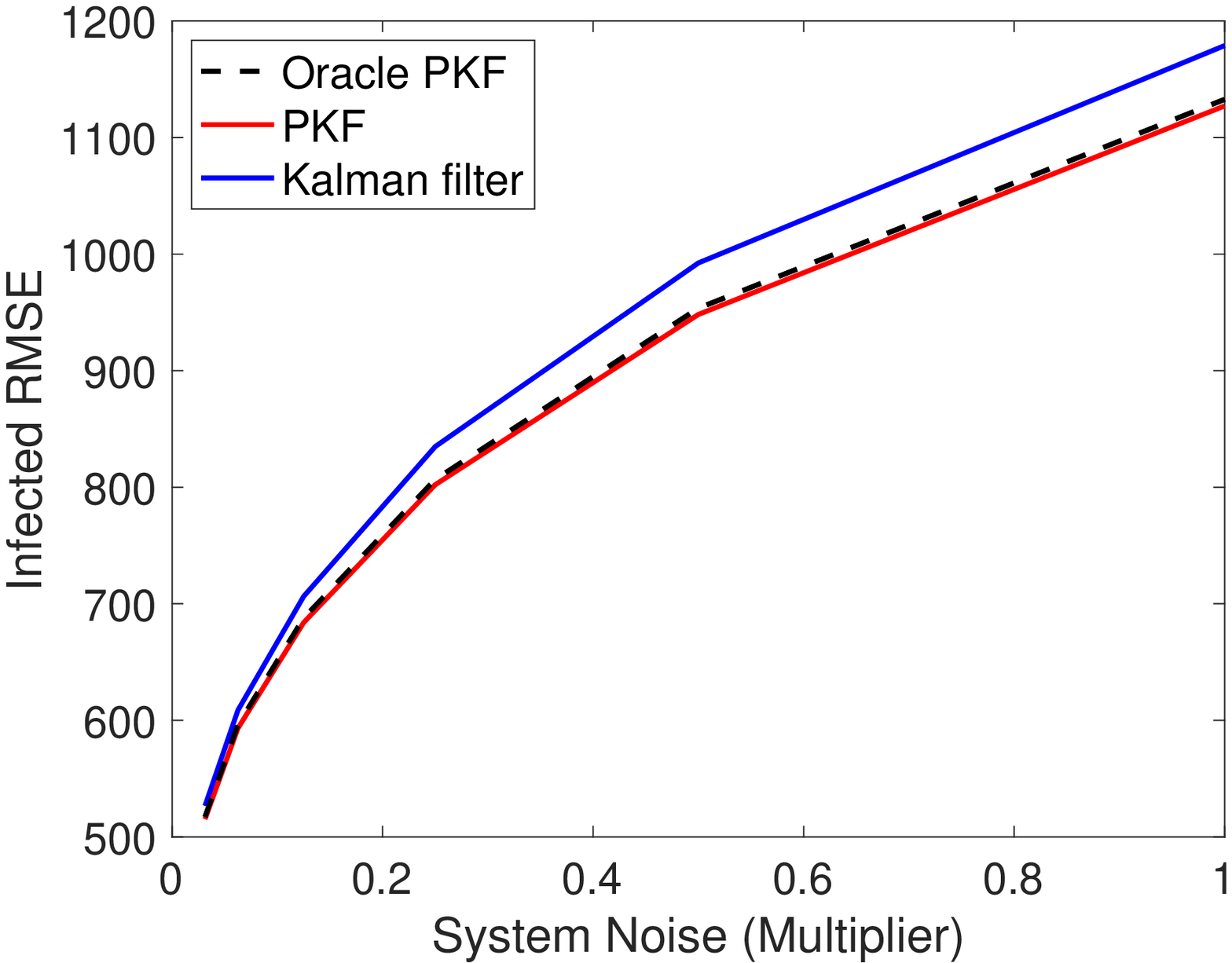}}
    \subfigure[]{\includegraphics[width=.49\linewidth]{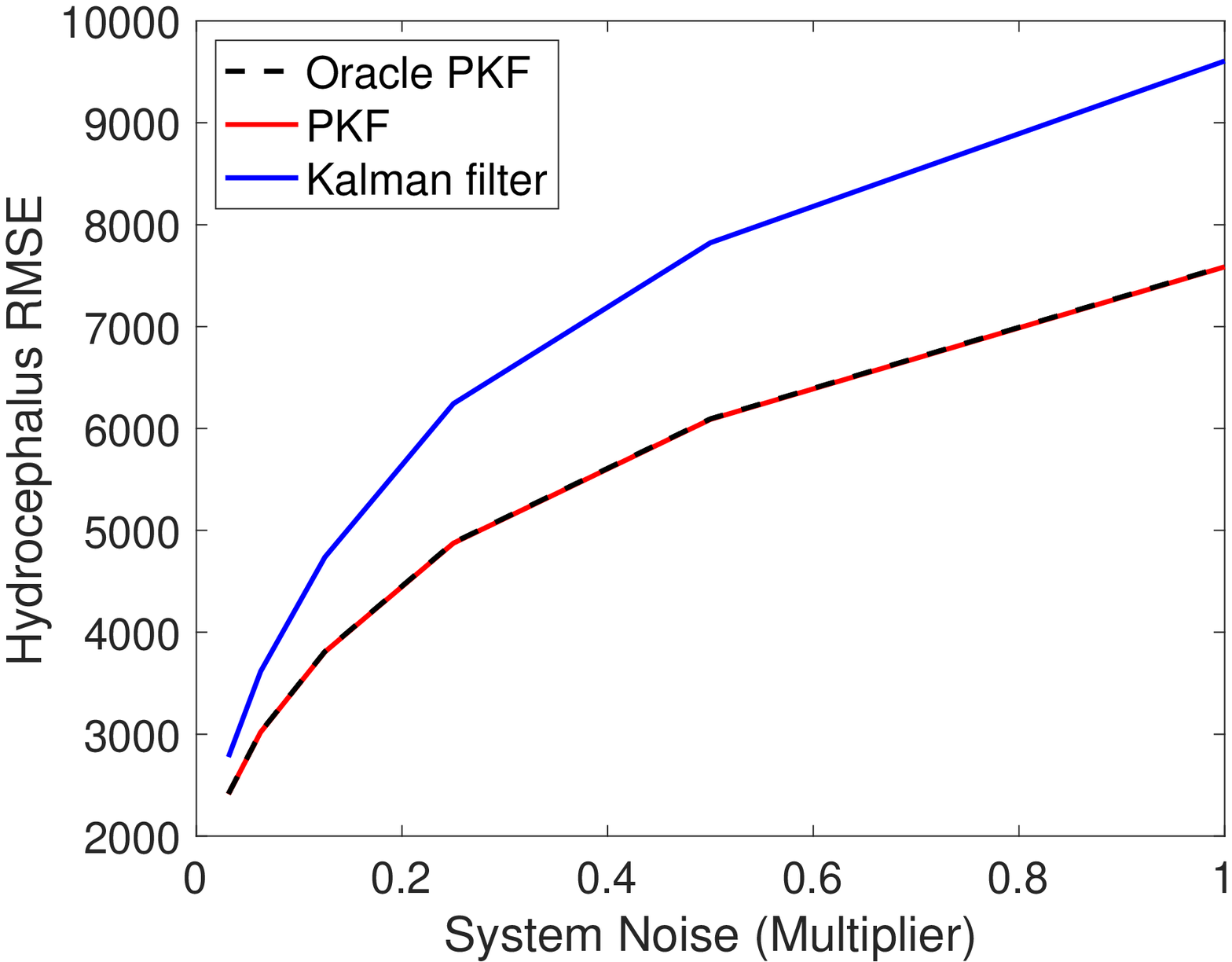}} 
    \caption{Comparison of the RMSE for the infected (a) and hydrocephalic (b) populations of the PKF (red, optimal variable gain) and the Kalman filter (blue, optimal fixed gain) as function of the system noise. We also compare to an oracle PKF (black, dashed) which is given the optimal choice of $V_k = \textup{diag}(B\vec x_k)$.  System noise is quantified as a multiple of the base noise level $W$. The RMSE is averaged over $10^6$ filter steps.}
    \label{estim1Fig2}
\end{figure}


\section{An Extended Poisson Kalman Filter for Contagious Disease}\label{EPKF}

So far we have considered a linear model for NS, which is sufficient for noncontagious infections.  However, contagious disease models typically contain an nonlinearity that models the contagious spread.  In order to broaden the applicability of the PKF we now show that it also offers improvements for these nonlinear models by using a standard approach to extend the Kalman equations to nonlinear dynamics.  Moreover, because there are potential mechanisms for contagious infections contributing to NS \cite{Paulson2020}, modeling these infections requires a nonlinear system.  As in the classical SIR model we assume that the contagious spread will be simultaneously proportional to the both the number of susceptibles and the number of infected and so we model the number of contagious cases at time $k$ as $\beta S_k I_k$, where $\beta$ is infectivity.  Introducing this term to the SIRH model we have
\begin{align} \label{SIRHcontagious}
S_{k+1} &= (1-d-a - g_S)S_k - \beta S_k I_k + b_k  \nonumber \\
I_{k+1}  &= (1-d-d_I-c)I_k + aS_k + \beta S_k I_k \\
R_{k+1} &= (1-d_R - g_R - h) R_k + cI_k  \nonumber \\
H_{k+1} &= (1- d_R - d_H)H_k + h R_k.  \nonumber
\end{align}
The state of the nonlinear model is $\vec x_{k+1} = f_k(\vec x_{k})$ where $\vec x_k = (S_k,I_k,R_k,H_k)$. When the birth rate is constant $b_k = b$ we can write $f_k = f$ and the system can be considered autonomous, but we also allow nonautonomous dynamics as long as each $f_k$ is known.  This model is of significant interest since estimating the $a$ and $\beta$ parameters from data would help determine the role of contagious spread in NS.

\begin{figure}
    \centering
    \subfigure[]{\includegraphics[width=.40\linewidth]{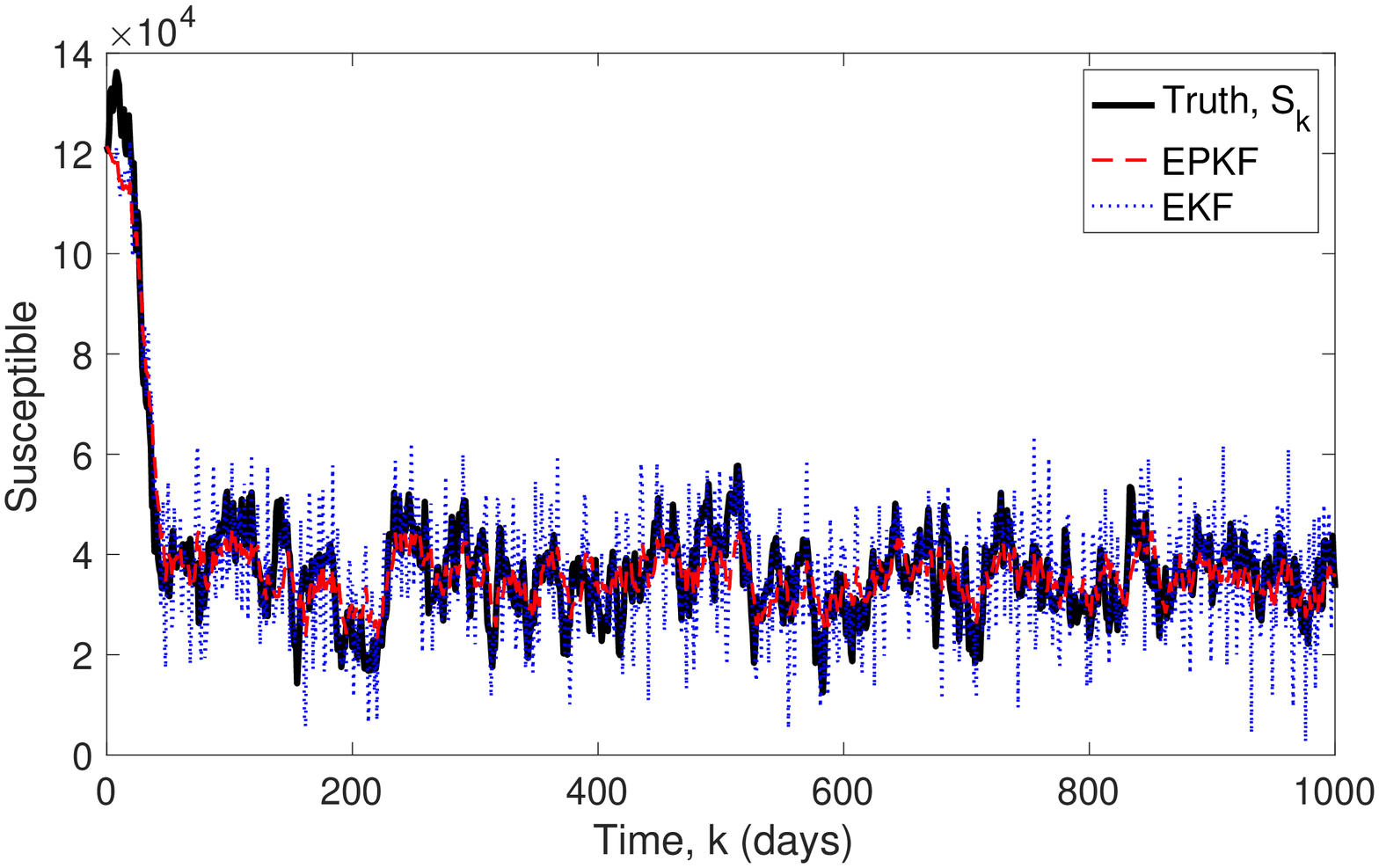}}
    \subfigure[]{\includegraphics[width=.40\linewidth]{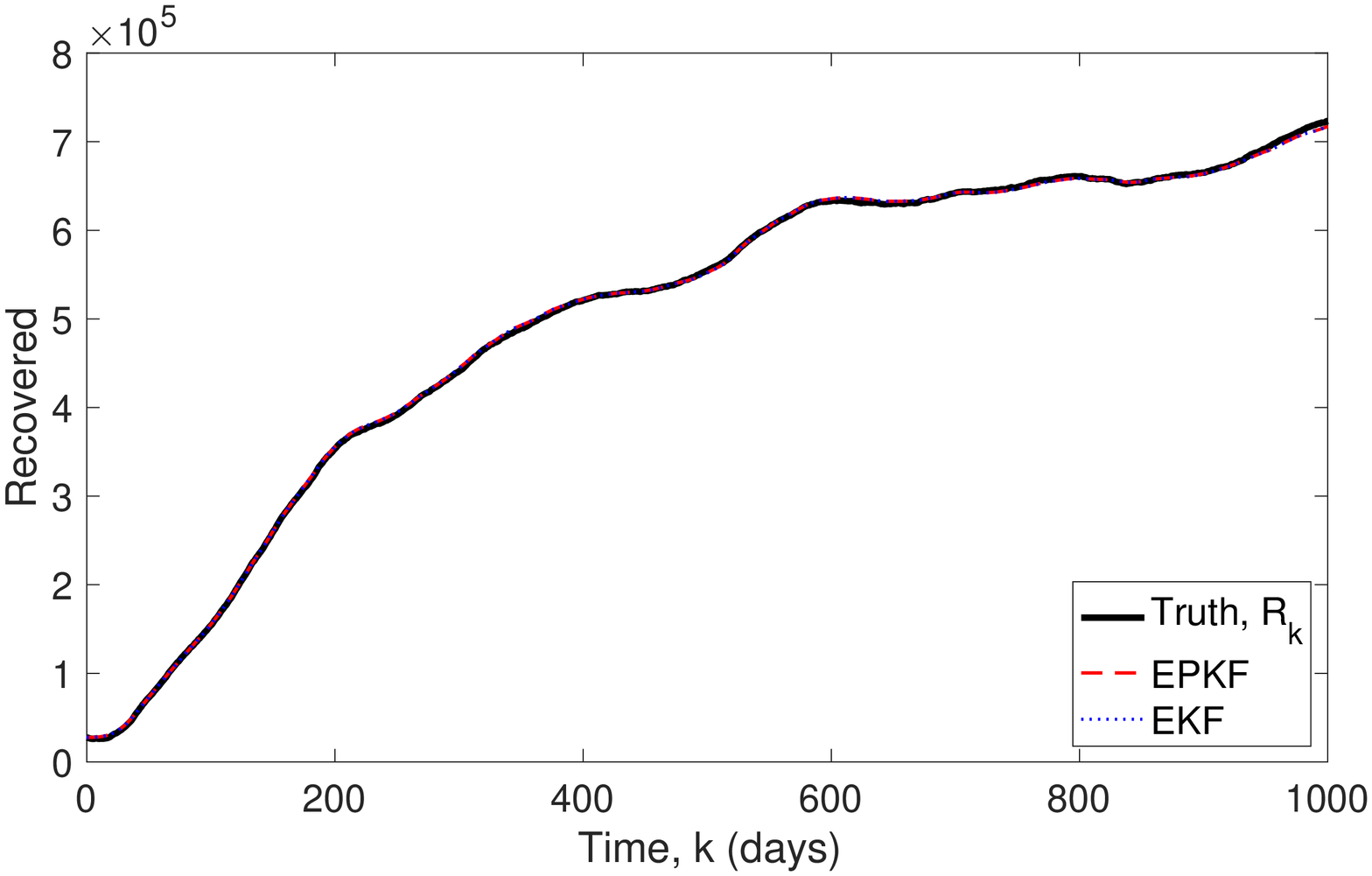}} \vspace{-4pt} \\
    \subfigure[]{\includegraphics[width=.40\linewidth]{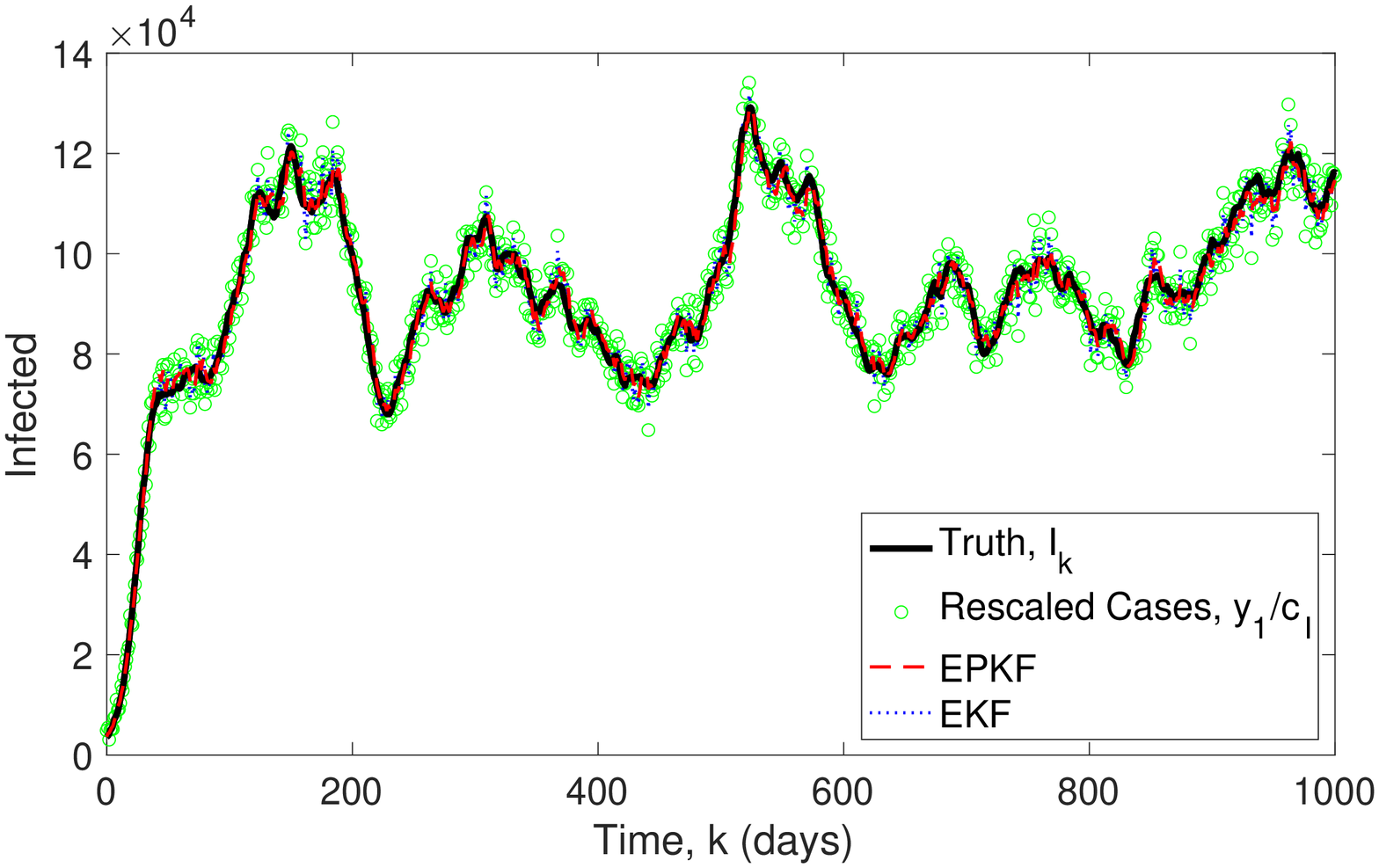}}
    \subfigure[]{\includegraphics[width=.40\linewidth]{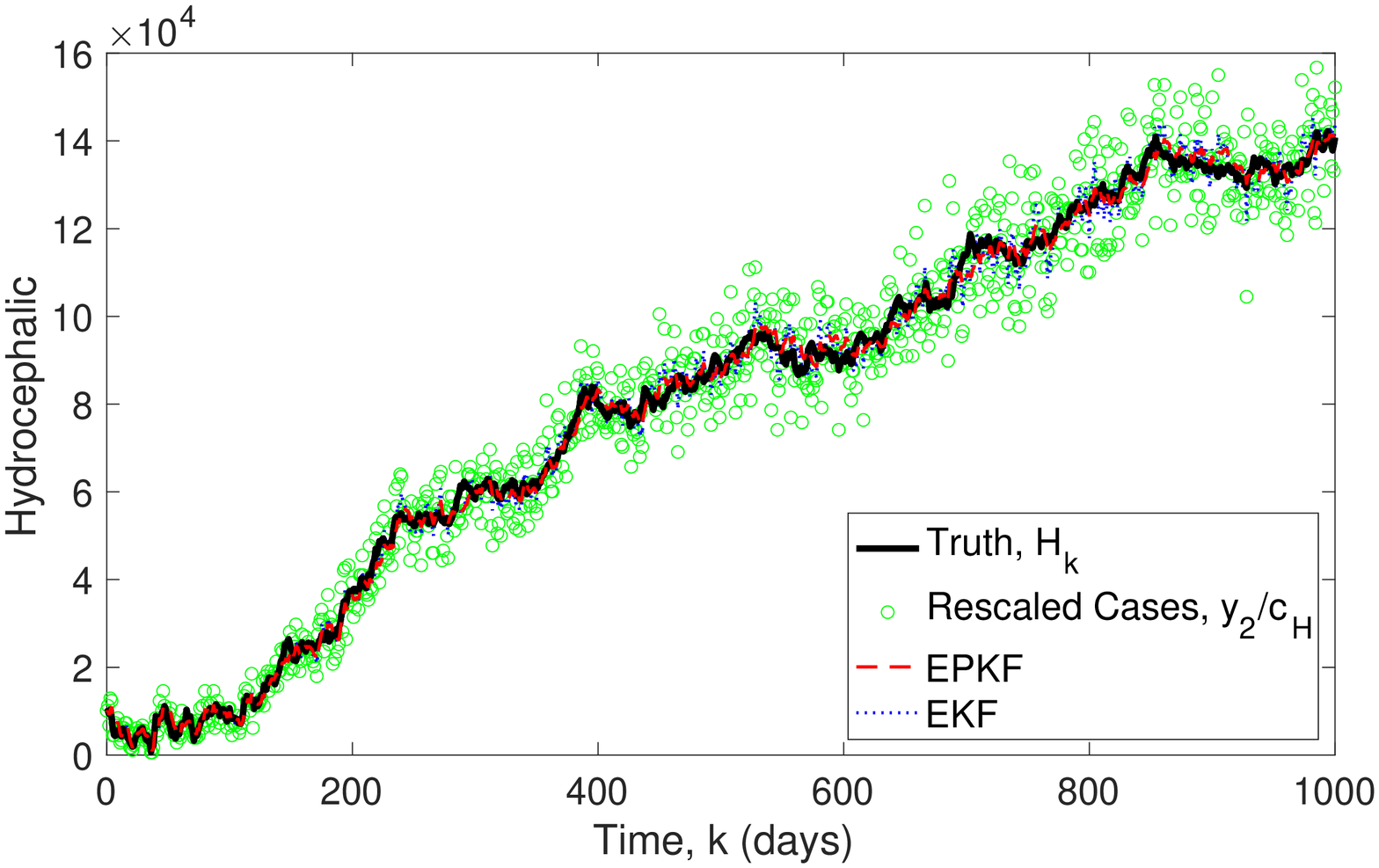}} \vspace{-4pt} \\
    \subfigure[]{\includegraphics[width=.40\linewidth]{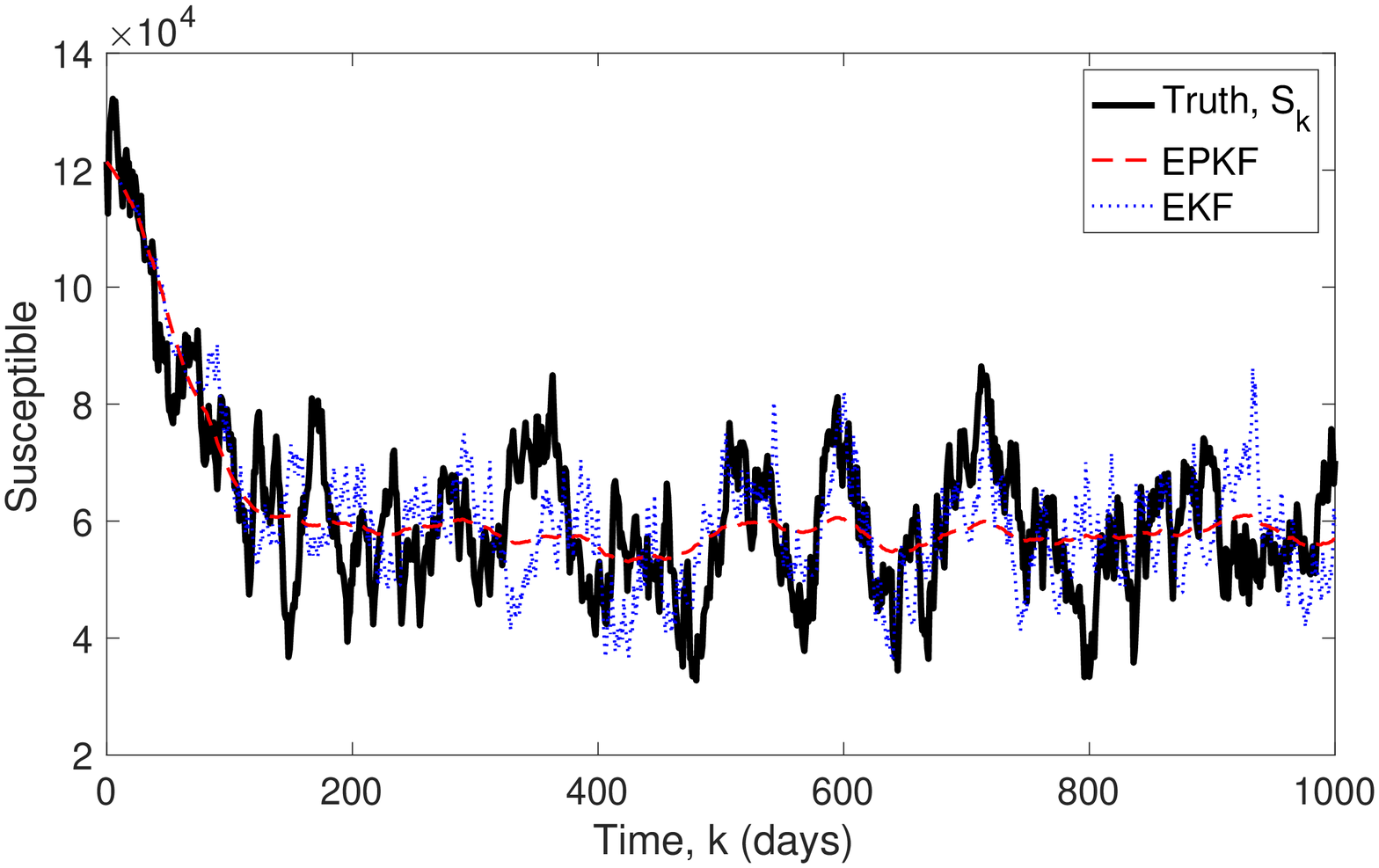}}
    \subfigure[]{\includegraphics[width=.40\linewidth]{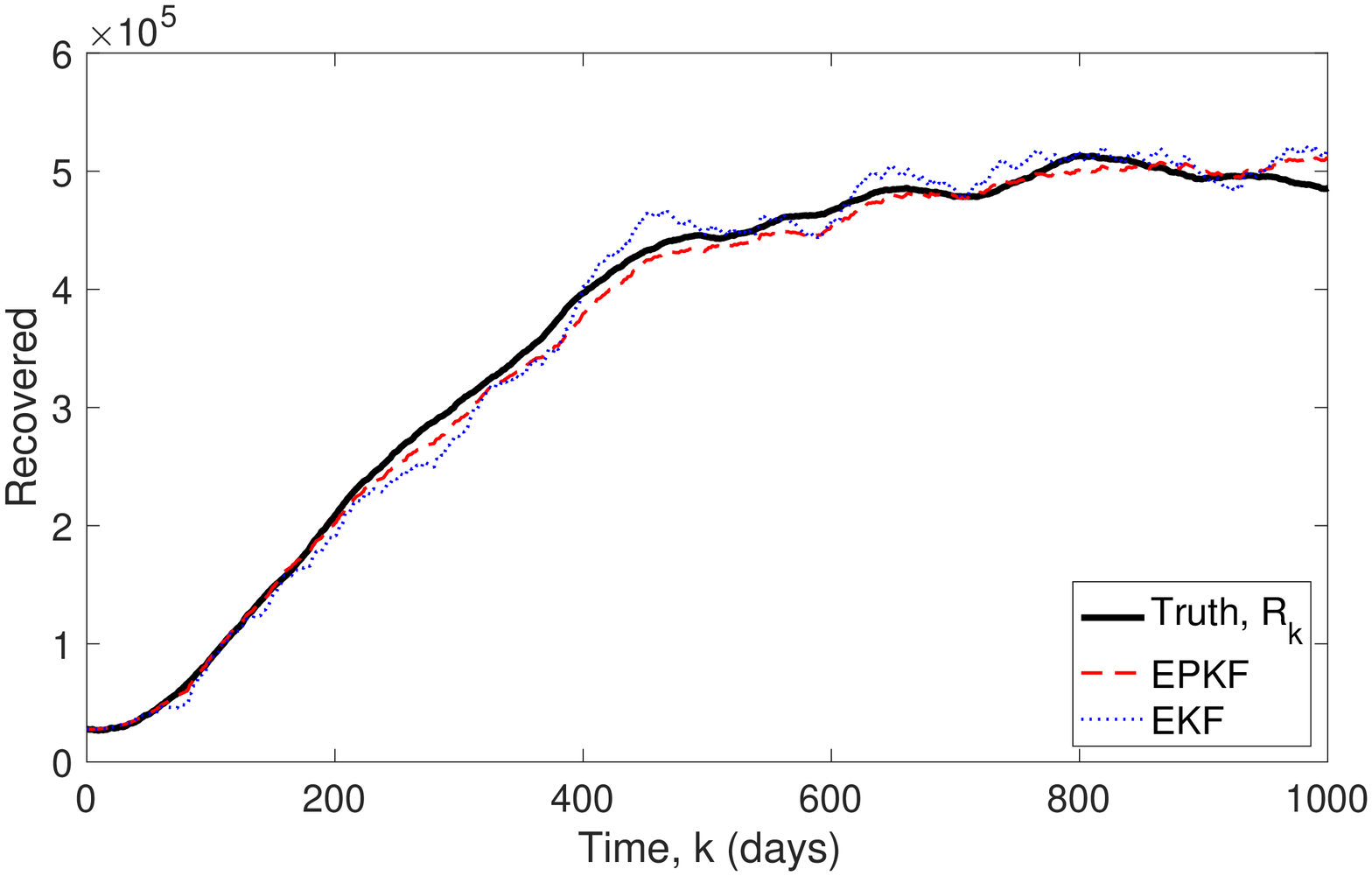}} \vspace{-4pt} \\
    \subfigure[]{\includegraphics[width=.40\linewidth]{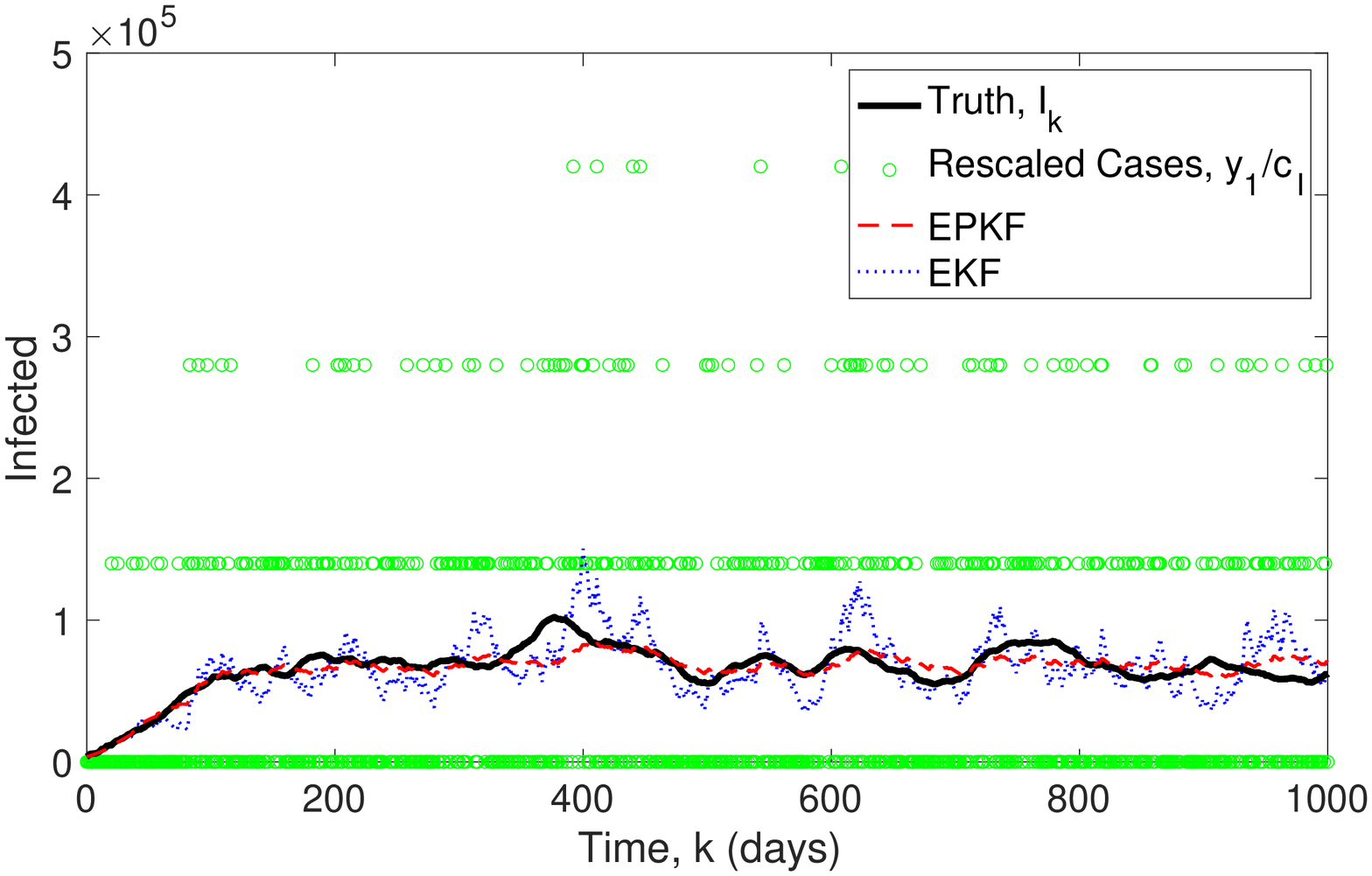}}
    \subfigure[]{\includegraphics[width=.40\linewidth]{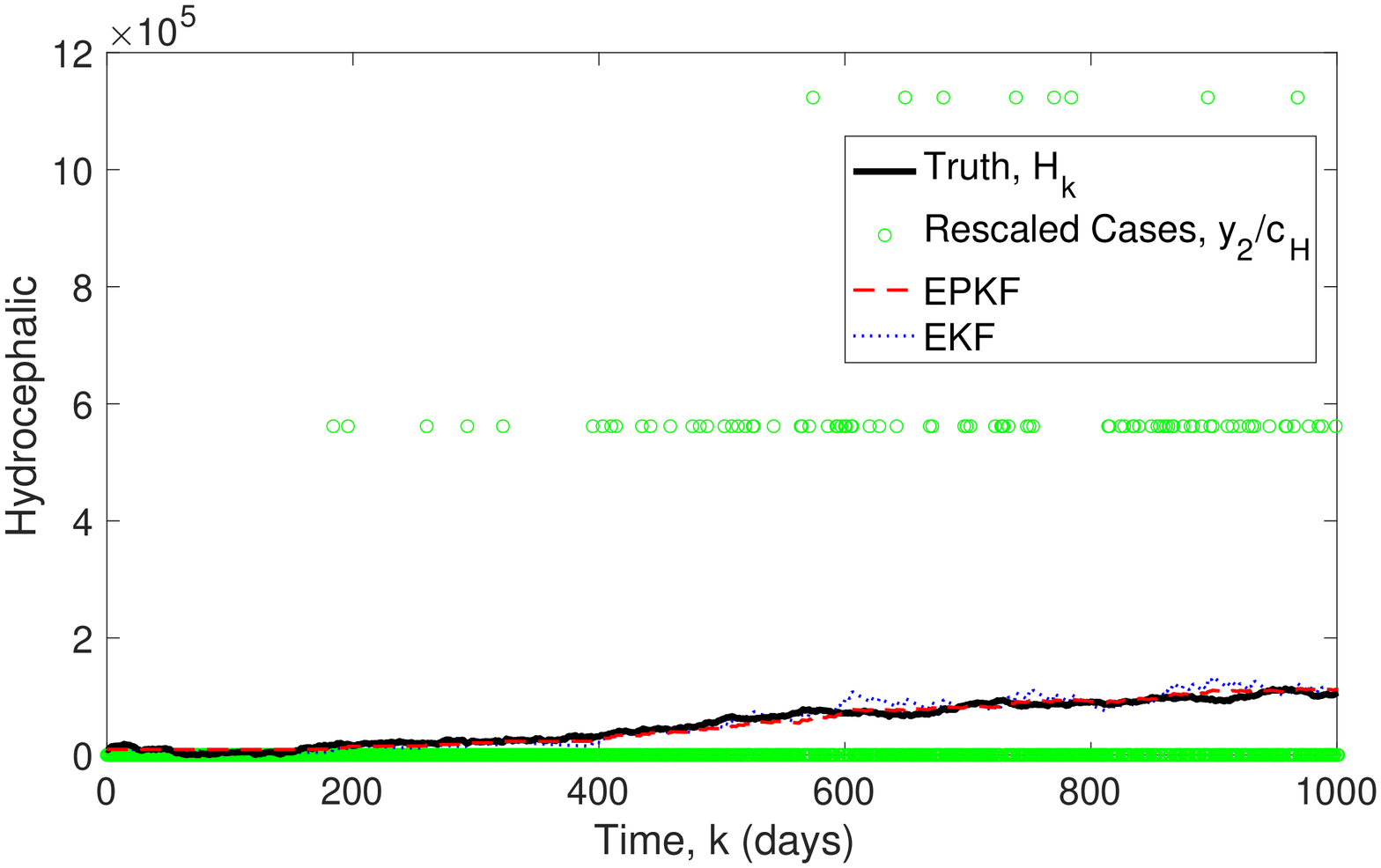}}\vspace{-4.6pt}
    \caption{Comparison of the Extended PKF (optimal variable gain) and the Extended Kalman filter (optimal fixed gain for the noncontagious equilibrium) for the contagious SIRH model with Poisson observations of the infected and hydrocephalic populations. We compare the true S, I, R and H values (black) to the PKF (red, dashed) and Kalman filter (blue, dotted) estimates.  (a)-(d) show a standard observation rate, $c_I=0.2/T_S, c_H = 0.6/T_R$, and (e)-(h) show a low observation rate, $c_I=0.0002/T_S, c_H = 0.0006/T_R$. Infected and hydrocephalic also show the observations (green, circles) rescaled by dividing by the constants $c_I, c_H$ respectively. The system is initialized at the noncontagious equilibrium and run forward with $\beta = 10^{-6}$ simulating the introduction of a contagious source of infection which moves the system to a new equilibrium.}
    \label{estim1FigContagious}
\end{figure}

A standard method for lifting the Kalman filter to the nonlinear setting is the Extended Kalman Filter (EKF) \cite{simon2006optimal}.  The EKF uses the nonlinear dynamics to produce the forecast $x_{k+1}^- = f_k(x_{k}^+)$, and a linear approximation to the dynamics is used for forecasting the covariance matrix $P_{k+1}^- = F_k P_{k}^+ F_k^\top + W$.  To define $F_k$ the EKF linearizes the dynamics around the current state estimate, setting $F_k = Df_k(\hat x_{k}^+)$.  This approximates the nonlinear dynamics as a nonautonomous linear system for the purposes of forecasting the covariance estimates.  In the example below we apply the EKF using
\begin{equation}
    F_k = Df_k(\hat x_{k}^+) = \left( \begin{array}{cccc} 1-d-a-g_S-\beta I_{k}^+ & -\beta S_{k}^+ & 0 & 0 \\ a+\beta I_{k}^+ & 1-d-d_I-c + \beta S_{k}^+ & 0 & 0 \\ 0 & c & 1-d_R-g_R-h & 0 \\ 0 & 0 & h & 1-d_R - d_H \end{array} \right).  \nonumber
\end{equation}
Since the PKF is also based on the Kalman equations, we can use this same idea to extend the PKF to nonlinear systems which we call the Extended PKF (EPFK).

\begin{figure}
    \centering
    \subfigure[]{\includegraphics[width=.49\linewidth]{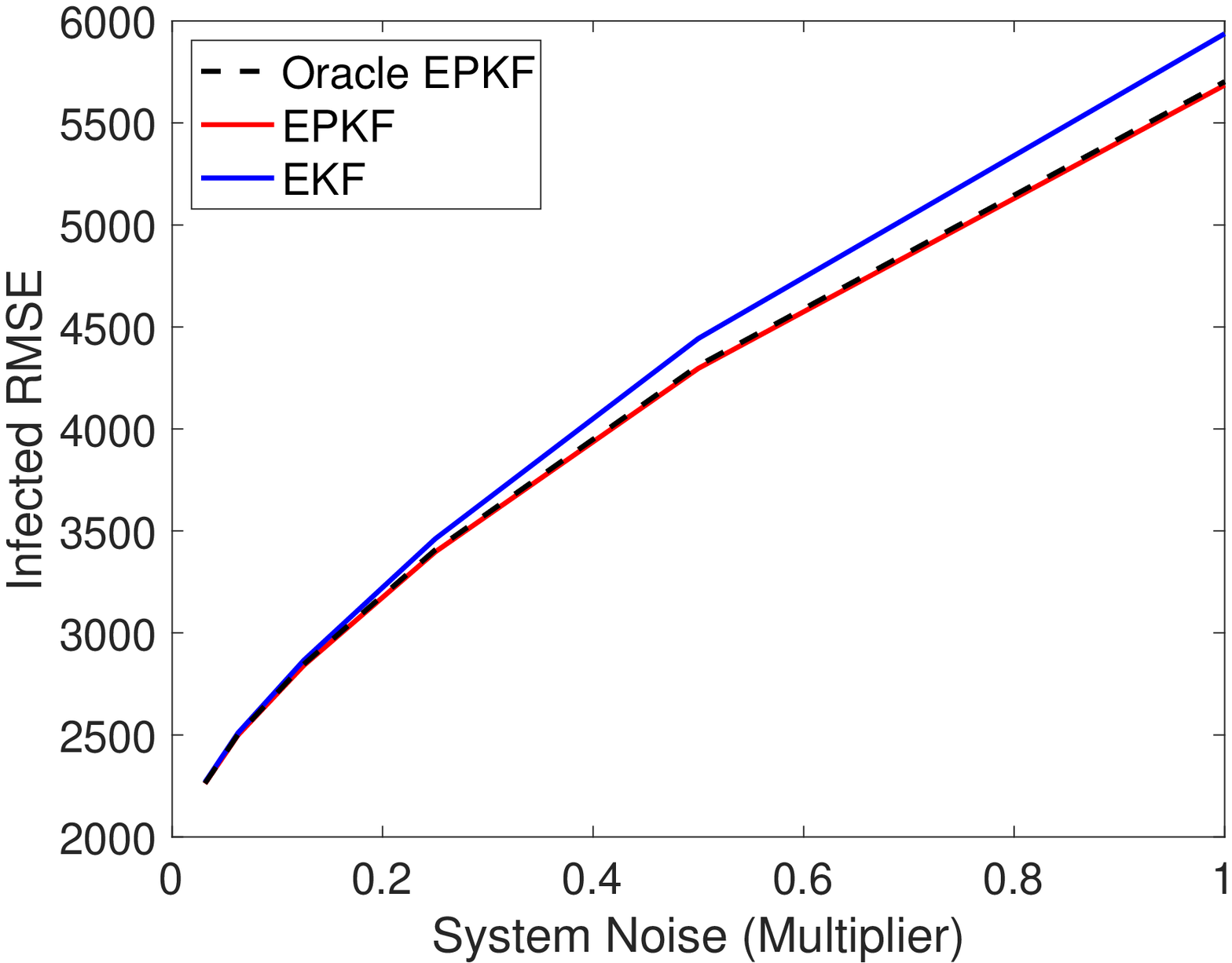}}
    \subfigure[]{\includegraphics[width=.49\linewidth]{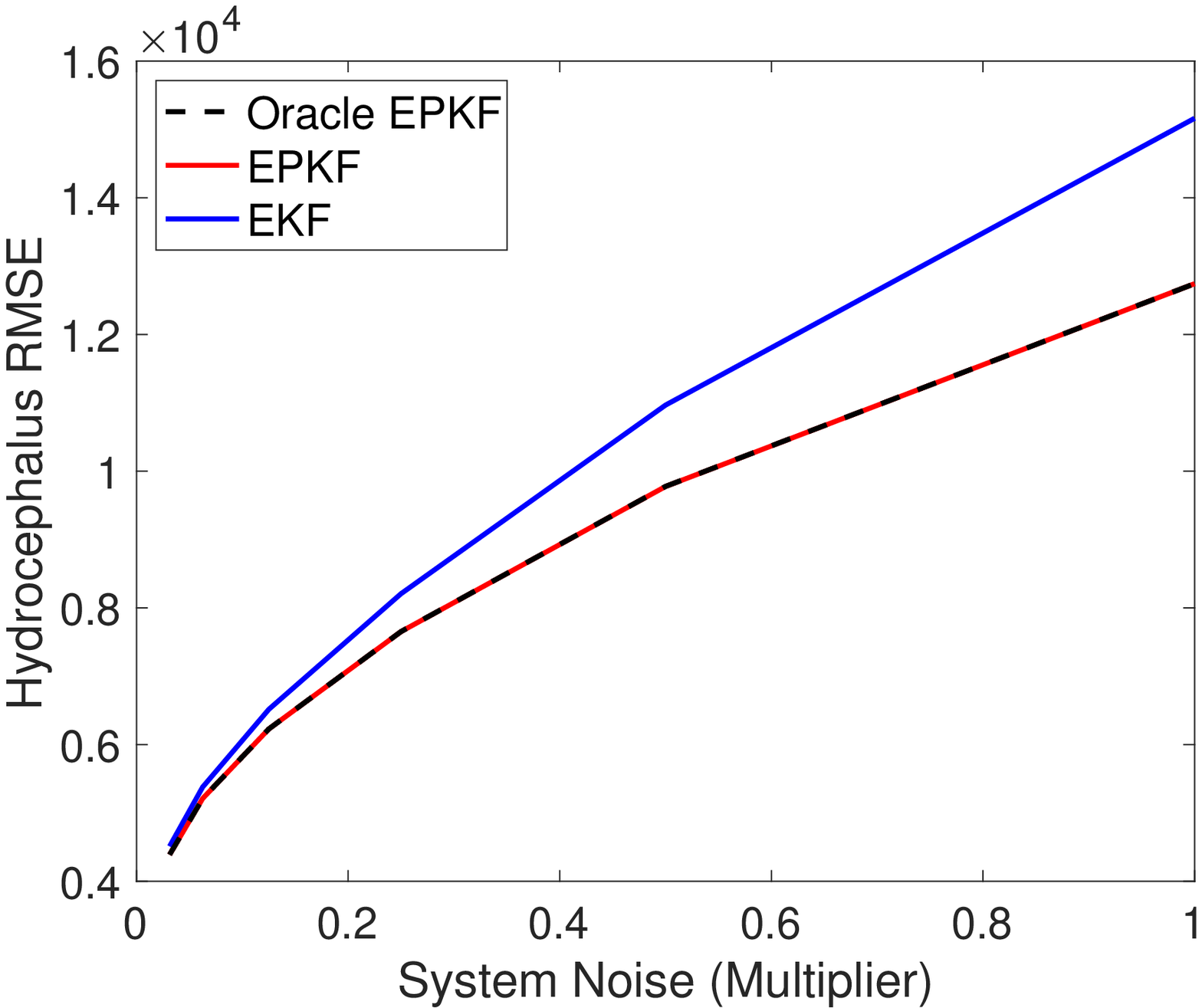}}
    \subfigure[]{\includegraphics[width=.46\linewidth]{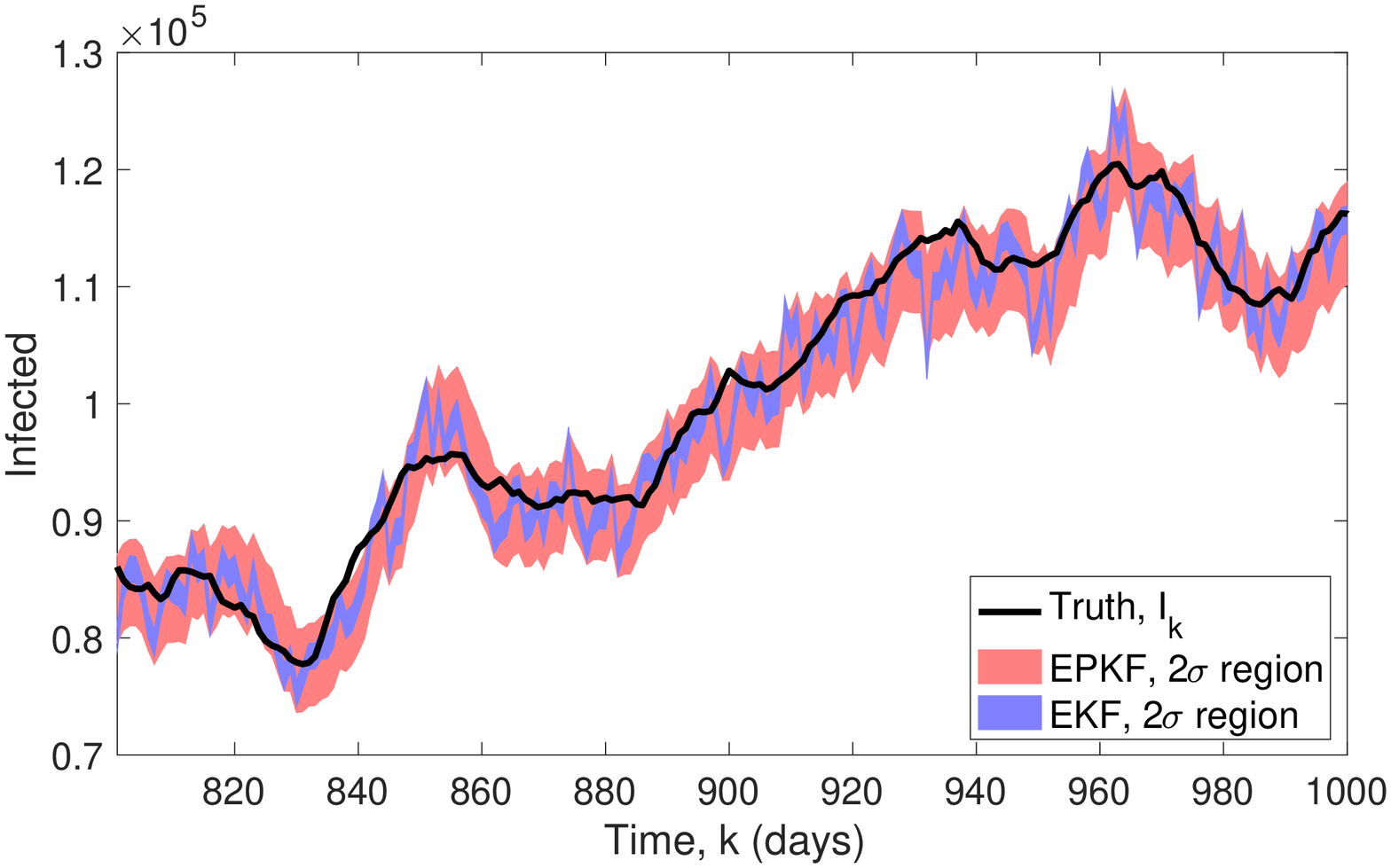}}
    \subfigure[]{\includegraphics[width=.46\linewidth]{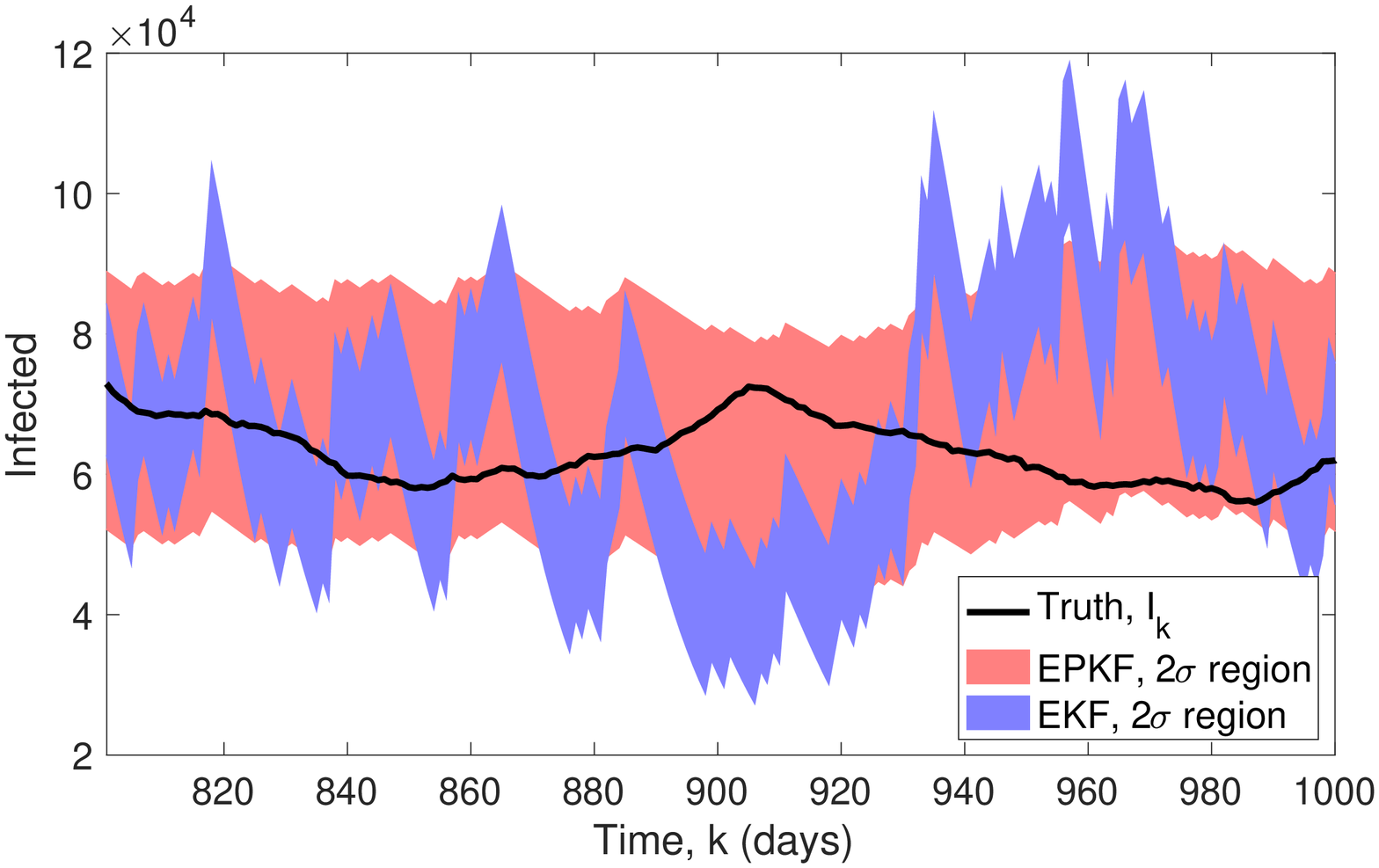}}
    \caption{(a)-(b) Comparison of the RMSE of the EPKF (red, optimal variable gain) and the EKF (blue, optimal fixed gain) as function of the system noise. We also compare to an oracle EPKF (black, dashed) which is given the optimal choice of $V_k = \textup{diag}(B\vec x_k)$.  System noise is quantified as a multiple of the base noise level $W$. The RMSE is averaged over $10^6$ filter steps. (c)-(d) The two standard deviation region around the EPKF and EKF estimates are shown using the variance estimated by the respective filters for the standard observation rate (c) and the low rate (d) from Fig.~\ref{estim1FigContagious}.  }
    \label{EPKFrmseComparison}
\end{figure}

In Fig.~\ref{estim1FigContagious} we simulate the system \eqref{SIRHcontagious} with $\beta = 10^{-6}$ initialized at the noncontagious equilibrium found in Section \ref{SIRHmodel} (all other parameters are the same as in Section \ref{SIRHmodel}).  This simulates the introduction of a contagious source of disease to a system that had stabilized at the noncontagious equilibrium.  To demonstrate the advantage of the EPKF over the non-Poisson version, we assume that the EKF is given the optimal fixed gain for the noncontagious equilibrium. Fig.~\ref{estim1FigContagious} shows that as the number of infected and hydrocephalic cases increase the EKF estimate becomes very noisy, since it is underestimating the observation variance and as a result follows the observations too closely.  This shows how the EPKF is able to automatically adapt to the new equilibrium.  The top tow rows of Fig.~\ref{estim1FigContagious} show the simulation with the standard observation rate, $c_I=0.2/T_S, c_H = 0.6/T_R$.  To demonstrate the ability of the EPKF to assimilate at very low observation rates, we repeated the experiment after reducing the observation rates by a factor of $1000$, and these results are shown in the bottom two rows of Fig.~\ref{estim1FigContagious}.  In this case, due to the low observation rates, of the 1000 days shown in Fig.~\ref{estim1FigContagious}, 624 days had zero infected reported and 908 days had zero hydrocephalic reported. Despite these large numbers of zeros, the EPKF effectively assimilates the available information.  

Another critical aspect of filter performance is the accuracy of the uncertainty quantification provided by the covariance matrix.  In the bottom row of Fig.~\ref{EPKFrmseComparison} we compare the true signal (black) to the two standard deviation region around the EPKF and EKF estimates.  These regions are generated by adding and subtracting twice the square root of the diagonal entry of the covariance matrix estimate at each time step.  An accurate uncertainty quantification would imply that the truth only leaves the region for around 10 of the 200 time steps shown. Notice that the EKF significantly underestimates the variance, meaning that it is overconfident in its estimator, because the true signal leaves the region much more than expected.  The EPKF gives a more accurate uncertainty quantification, perhaps slightly overestimating the uncertainty in the low observation rate case since the true signal never leaves the region.

We also compare the EPKF to the standard EKF using the optimal fixed gain for the contagious equilibrium, starting from the contagious equilibrium.  Fig.~\ref{EPKFrmseComparison} shows that the EPKF has the largest advantage at high system noise levels (as in Fig.~\ref{estim1Fig2}) due to the absence of unstable directions in the deterministic dynamics near equilibrium.  This suggests that the EPKF would have an even more significant advantage for chaotic systems.  Also, as in Fig.~\ref{estim1Fig2}, we compare the empirical EPKF, which uses the state estimate to determine the observation variance, to an oracle version of the EPKF that uses the true state to determine the observation variance, and again the performance is very similar.  

Finally, we note that two closely related alternative approaches to applying the Kalman filter to nonlinear dynamics are the Ensemble Kalman Filter (EnKF) \cite{simon2006optimal} and Ensemble Adjustment Kalman Filter \cite{anderson2001ensemble}. Both of these methods use an ensemble forecast instead of linearizing the dynamics to estimate $P_{k+1}^-$.  Since the EnKF and EAKF are also based on the Kalman formulas the PKF method can be applied just as easily to these methods. A potentially significant issue for most ensemble methods is that ensemble members are typically not guaranteed to be positive, leading to negative populations in the ensemble members.  One possible method to address this would be to log-transform the model, however this has the downside of introducing stronger nonlinearities into the model. A related idea would be to replace the observations with the variance-stabilizing transformation $\sqrt{y_k + 1/4}$ which is approximately Gaussian with mean $\sqrt{Bx_k}$ and constant variance $1/4$.  Both these approaches simplify the statistics at the cost introducing additional nonlinearity into the model or observation function, and exploring these tradeoffs is an interesting subject for future investigations.

This approach is closely related to the recent work of \cite{li2020substantial} applied to COVID-19 with Poisson observations. There, an EAKF was used with a heuristically chosen observation covariance $V_k$ which was proportional to the square of the observations.  In fact, \cite{li2020substantial} also suggested an alternative of using $V_k$ proportional to the observations.  Our analysis of the PKF shows that in fact the optimal choice for linear dynamics is to set $V_k$ equal to the \emph{predicted} observations, as we propose in the EPKF.  In fact, our analysis of a nonlinear contagious model in this section suggests that \cite{li2020substantial} were very close to the optimal approach.

\section{Conclusion}\label{futuredirections}

The mathematical methods of filtering and control originated with linear models, direct observations, and Gaussian noise. However, these assumption may not be appropriate in the context of disease modeling, where the observation of cases of communicable and noncommunicable disease often present as Poisson processes. Unfortunately, the customary Kalman filter is not well suited to assimilate such Poisson occurrences and estimate the true number of underlying cases. The Poisson Kalman Filter (PKF) is an optimal filter for such surveillance. 

The linear PKF is a very general filter suitable for a broad range of noncommunicable disease observations where the nonlinear interaction of susceptible and diseased individuals is not an inherent component of disease initiation (including noninfectious disease such as diabetes or stroke). We extended our findings to encompass the nonlinear interactions of susceptible and infected individuals typical of contagious disease through an extended PKF or EPKF.

We also created, to our knowledge, the first SIRH compartmental model that can be used in the surveillance of neonatal sepsis and postinfectious hydrocephalus, endemic disease that causes tremendous numbers of yearly global deaths in the developing world. In particular, we incorporated both the noncommunicable and communicable dynamics that have been observed in these infant infections.

Additionally, our case study of sepsis and hydrocephalus suggests many promising directions for future development. If a more careful tracking of cases is desired, the neonatal and infancy periods can be segmented in to multiple stages. For example, it is well known that the infections that are acquired perinatally from the mother, so called early onset sepsis, are manifest within the days of the first week of life. Infections during the subsequent weeks of the neonatal period (first 4 weeks) are environmentally acquired and are typically a very different spectrum of organisms. Therefore $S_{(0i)}$ could represent susceptible at $(0-i)$-days after birth, and $S_{(ij)}$ could represent susceptibles from $(i-j)$-days after birth, with varying rates and risks from sepsis at different stages of development.  Another critical factor in sepsis and hydrocephalus cases is environmental variables such as rainfall \cite{rainfall2012}, which suggests that a full spatiotemporal model will be necessary to more fully represent these dynamics. Recent findings \cite{Paulson2020} demonstrate that more than one infection (co-infection) can be found in some of these infants -- perhaps even a mixture of noncommunicable bacteria and communicable viruses -- demonstrating that a mixed linear and nonlinear model would be required to represent such co-infections.  A spatiotemporal model would allow the optimal control to consider multiple methods of control and determine ideal locations and times to apply each.  

The recent coronavirus (COVID-19) epidemic is one where Poisson dynamics are required in the modeling and data assimilation \cite{li2020substantial}. In this article, we derive the Kalman equations that lead to the optimal linear filter, and propose that the optimal choice is to set the observed covariance equal to the predicted observations as proposed in the nonlinear EPKF.

\section{Acknowledgements}\label{acknowledgements}
We are grateful to D. Simon, M. Ferrari and M. Norton for their helpful discussions. Funded by an NIH Director's Transformative Award 1R01AI145057.

\bibliographystyle{apsrev4-2}
\bibliography{SIRH}

\appendix 

\section{Optimal Linear Filter Derivation}\label{filterapp}

We develop a recursive weighted least square (WLS) estimator by determining how to estimate a constant on the basis of several measurements that follow a Poisson distribution. We then use it as a basis for developing a discrete-time Kalman filter that uses Poisson distributed measurements, which we accomplish by adapting the techniques used in developing a discrete-time Kalman filter~\cite{simon2006optimal}. We will remove the arrow symbols from vectors in this section to reduce the number of symbols in formulas. 

\subsection{Recursive weighted least squares estimator with Poisson observations}
The measurement equation of a linear stochastic discrete-time dynamic system  with indirect measurements of the state is given as
\begin{align}
x_k &=   x_{k-1}   \nonumber \\
y_k & = z_k, \qquad \qquad \qquad    z_k \sim \mbox{Poisson}(\lambda_k = B  x_k) \label{eqn_rlsdyn}
\end{align}
where $x_k \in \mathbb{R}^n$, $y_k \in \mathbb{R}^m$ are the state vector and
measurement vector respectively, and $B \in \mathbb{R}^{m \times n}$ is a known deterministic matrix. The measurement variable $z_k \in \mathbb{R}^m$ follows a random Poisson distribution 
\begin{align}
p(z_k | x_k) = \frac{(B x_k)^{z_k}}{z_k !} e^{-(B x_k)}, \qquad z = 0, 1, 2, \ldots \label{eqn_poss_dist}
\end{align}
where the rate $\lambda_k = B x_k \geq 0$. We note that the positive real rate $\lambda_k$ equals the expected value $\mathbb{E}[z_k]$ and variance $\textup{Var}(z_k)$ such that $$\mathbb{E}[z_k] = \textup{Var}(z_k) = \lambda_k  $$ 

 In our case, we assume that the state $x$ is non-negative. We also assume that each Poisson random variable $(z_k)_i$ has a rate that is only dependent on the corresponding state $(x_k)_i$, such that $\mathbb{E}[(z_k)_i] = c_i (x_k)_i$ where $c_i \in \mathbb{R}_+$ are known deterministic parameters. A linear recursive estimator can be written as
\begin{align}
y_k &=  z_k   \\
\hat{x}_k &= \hat{x}_{k-1} + K_k \left(  y_k -  B \hat{x}_{k-1} \right)\label{eqn_rls}
\end{align}
where $K_k$ is the optimal gain matrix to be determined.
\begin{thm}
\label{theorem_unbiased}
The estimator of~(\ref{eqn_rls}) is an unbiased estimator of $x_k$; that is, $\mathbb{E}[ x_k - \hat{x}_{k}] = 0 $
\end{thm}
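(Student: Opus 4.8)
The plan is to argue by induction on $k$, exactly as one does for the ordinary Kalman filter. The only structural input needed is that the observation has been chosen \emph{well}: since $z_k$ is Poisson with rate $\lambda_k = B x_k$, the law of total expectation gives $\mathbb{E}[z_k] = \mathbb{E}\!\left[\mathbb{E}[z_k \mid x_k]\right] = \mathbb{E}[\lambda_k] = \mathbb{E}[B x_k] = B x_k$, so the innovation $y_k - B\hat x_{k-1}$ that drives the update~\eqref{eqn_rls} has the same expectation as $B(x_k - \hat x_{k-1})$. In particular no finer property of the Poisson law is required here; Lemma~\ref{lem1} will be needed only for the optimality claim, Theorem~\ref{minimumvariance}, not for unbiasedness.

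Concretely, I would set $\tilde x_k := x_k - \hat x_k$ and, for the base case, assume the standard initialization hypothesis $\mathbb{E}[\tilde x_0] = 0$ (in the present setting $x_0$ is the deterministic constant being estimated, so $\hat x_0 = \mathbb{E}[x_0]$ already makes this an equality). For the inductive step, suppose $\mathbb{E}[\tilde x_{k-1}] = 0$. Subtracting the update~\eqref{eqn_rls} from the trivial dynamics $x_k = x_{k-1}$ of the model~\eqref{eqn_rlsdyn} and inserting $\pm B x_k$ inside the innovation yields
\begin{align}
\tilde x_k &= \tilde x_{k-1} - K_k\big(y_k - B\hat x_{k-1}\big) \nonumber \\
           &= (I - K_k B)\,\tilde x_{k-1} - K_k\big(z_k - B x_k\big), \nonumber
\end{align}
where I used $x_k = x_{k-1}$ and $y_k = z_k$. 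Taking expectations of both sides and using that the gain $K_k$ is a deterministic matrix (it is assembled only from $B$, the prior covariance, and the observation-noise variance matrix, none of which depend on the random realization), the first term contributes $(I - K_k B)\,\mathbb{E}[\tilde x_{k-1}] = 0$ by the inductive hypothesis, and the second contributes $-K_k\,\mathbb{E}[z_k - B x_k] = 0$ by the computation in the previous paragraph. Hence $\mathbb{E}[\tilde x_k] = 0$, which closes the induction and proves $\mathbb{E}[x_k - \hat x_k] = 0$ for all $k$.

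I do not anticipate a genuine obstacle — this is the ``straightforward'' direction advertised in the main text — but the one point that must be handled with care is the status of $K_k$: the cancellation of the first term above requires $K_k$ to be non-random (equivalently, independent of the accumulated error $\tilde x_{k-1}$), which is precisely the sense in which the conclusion holds ``for any choice of gain matrices'' in Theorem~\ref{unbiased}. When the implemented filter instead sets $V_k = \textup{diag}(B\hat x_k^-)$ the gain becomes data-dependent, so strictly speaking unbiasedness is established for the idealized (covariance-only) gain, with the data-adaptive version justified as an approximation, consistent with the remark following Theorem~\ref{minimumvariance}. I would also note that the same error recursion displayed above, after squaring and taking the trace, is the starting point for the optimality argument of Theorem~\ref{minimumvariance}, so the two appendix results share one computation.
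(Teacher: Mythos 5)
Your proof is correct and follows essentially the same route as the paper's: both insert $\pm Bx_{k-1}$ (equivalently $\pm Bx_k$, since $x_k = x_{k-1}$ here) into the innovation, obtain the error recursion $\epsilon_{x,k} = (I - K_k B)\,\epsilon_{x,k-1} - K_k(z_k - Bx_{k-1})$, and conclude by taking expectations using $\mathbb{E}[z_k] = Bx_k$ together with the inductive hypothesis. Your added caveat that $K_k$ must be non-random for the first term to vanish is left implicit in the paper but is accurate, and your observation that the implemented choice $V_k = \textup{diag}(B\hat x_k^-)$ makes the gain data-dependent correctly identifies the sense in which the practical filter is only an approximation to the idealized unbiased one.
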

\begin{proof}
Calculating the estimation error mean, we write
\begin{align}
\mathbb{E}[\epsilon_{x,k}] &= \mathbb{E}[x_k - \hat{x}_k]\nonumber \\
&= \mathbb{E}\left[ x_k - \hat{x}_{k-1} - K_k \left(  y_k -  B \hat{x}_{k-1} \right) \right]\nonumber \\
&= \mathbb{E}\left[ \epsilon_{x,k-1} - K_k \left( z_k -  B \hat{x}_{k-1} \right) \right]\nonumber \\
&= \mathbb{E}\left[ \epsilon_{x,k-1} - K_k \left( z_k - B x_{k-1}  + B x_{k-1}  -  B \hat{x}_{k-1} \right) \right]\nonumber \\
&= \mathbb{E}\left[ \epsilon_{x,k-1} - K_k B \epsilon_{x,k-1} - K_k \left(z_k - B x_{k-1}  \right) \right]\nonumber \\
&=  \mathbb{E} \left[ ( I - K_k B ) \epsilon_{x,k-1}  - K_k  ( z_k  - B x_{k-1}  )  \right] \nonumber \\
&= \left( I - K_k B \right) \mathbb{E} \left[ \epsilon_{x,k-1} \right] - K_k \left(\mathbb{E}[z_k] - B x_{k-1} \right)  \label{eqn_rls_errM}
\end{align}
So since $\mathbb{E}[z_k] = B x_{k} = B x_{k-1}$ and inductively we assume $\mathbb{E}\left[ \epsilon_{x,k-1} \right] = 0$, we have $\mathbb{E}[\epsilon_{x,k}] = 0$. Therefore~(\ref{eqn_rls}) is an unbiased estimator. 
\end{proof}

Note that the unbiased estimator property holds regardless of the value of the gain matrix $K_k$. This implies that, on average, the state estimate $\hat{x}_k$ will be equal to the true state $x_k$, when measurements -- that follow a Poisson distribution whose rate is dependent on the state -- are taken. 
Moreover, we note that Theorem \ref{theorem_unbiased} holds whenever $\mathbb{E}[z_k] = B x_k$, so as long as the expected value of the observations is linear in the state one could choose an appropriate $B$ to have an unbiased estimator.  We now turn to the construction of the optimal linear filter. 

\begin{thm}\label{minimumvariance_thm}
Among all linear filters, the filter given by the linear estimator of~\ref{eqn_rls} with gain matrix $K_k$ given by
\[ K_k  = P_{k-1} B^T  \left( B P_{k-1} B^T    + V_k \right)^{-1} \] is optimal in the sense of minimal sum of squared errors when $V_k = \textup{diag}(B x_k)$. In other words, \[\frac{\partial J_k}{\partial K_k} = 0 \]
where 
\[J_k =  \textup{trace}(P_k) =  \mathbb{E}[||\hat x_k -   x_k||_2^2] = \sum_i \mathbb{E}[(\hat x_k -  x_k)_i^2] \] 
\end{thm}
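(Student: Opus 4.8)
The plan is to reduce the statement to a one-variable convex quadratic minimization, in the spirit of the classical Kalman-gain derivation, with Lemma~\ref{lem1} supplying the one nonstandard ingredient. First I would obtain a closed recursion for the error covariance $P_k := \mathbb{E}[\epsilon_{x,k}\epsilon_{x,k}^\top]$. Starting from the error identity already derived in the proof of Theorem~\ref{theorem_unbiased},
\[ \epsilon_{x,k} = (I - K_k B)\,\epsilon_{x,k-1} \;-\; K_k\,(z_k - B x_k) \]
(recall $x_k = x_{k-1}$), I would expand $\epsilon_{x,k}\epsilon_{x,k}^\top$ and take expectations, producing the ``Joseph'' term $(I-K_kB)P_{k-1}(I-K_kB)^\top$, the noise term $K_k\,\mathbb{E}[(z_k-Bx_k)(z_k-Bx_k)^\top]\,K_k^\top$, and two cross terms proportional to $\mathbb{E}[\epsilon_{x,k-1}(z_k-Bx_k)^\top]$.

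The main obstacle—and the crux of the argument—is showing the two cross terms vanish; this is where Lemma~\ref{lem1} enters, in a conditional, vector-valued form. Conditioning on $x_k$ together with the past observations $y_1,\dots,y_{k-1}$ makes $\epsilon_{x,k-1} = x_{k-1}-\hat x_{k-1}$ deterministic (it is a function of $x_{k-1}=x_k$ and of the past), while by assumption $z_k$ is, conditional on $x_k$, a vector of independent Poissons with $\mathbb{E}[z_k\mid x_k]=Bx_k$; hence, by the tower property,
\[ \mathbb{E}\big[\epsilon_{x,k-1}(z_k-Bx_k)^\top\big] = \mathbb{E}\Big[\epsilon_{x,k-1}\,\mathbb{E}\big[(z_k-Bx_k)^\top \mid x_k, y_{1:k-1}\big]\Big]=0 . \]
The scalar case of this is exactly Lemma~\ref{lem1}, whose proof works verbatim with $\lambda-\mathbb{E}[\lambda]$ replaced by any quantity measurable with respect to the conditioning. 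For the noise term, independence of the Poisson coordinates gives $\mathbb{E}[(z_k-Bx_k)(z_k-Bx_k)^\top\mid x_k]=\operatorname{diag}(Bx_k)$, so with $V_k=\operatorname{diag}(Bx_k)$ the recursion closes \emph{exactly}:
\[ P_k = (I-K_kB)P_{k-1}(I-K_kB)^\top + K_kV_kK_k^\top . \]
It is this exact closure—unavailable for a generic, state-independent $V_k$—that singles out $V_k=\operatorname{diag}(Bx_k)$ as the correct quantity to optimize.

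It then remains to minimize $J_k = \operatorname{trace}(P_k)$ over $K_k$, which is routine. Expanding, $P_k = P_{k-1} - K_kBP_{k-1} - P_{k-1}B^\top K_k^\top + K_k(BP_{k-1}B^\top + V_k)K_k^\top$, and using $\partial\operatorname{trace}(K_kM)/\partial K_k = M^\top$ together with $\partial\operatorname{trace}(K_k M K_k^\top)/\partial K_k = 2K_kM$ for symmetric $M$ (here $P_{k-1}$ is symmetric), one obtains
\[ \frac{\partial J_k}{\partial K_k} = -2P_{k-1}B^\top + 2K_k\big(BP_{k-1}B^\top + V_k\big), \]
so $\partial J_k/\partial K_k = 0$ if and only if $K_k = P_{k-1}B^\top(BP_{k-1}B^\top + V_k)^{-1}$, as claimed. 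I would then remark that this stationary point is the global minimizer because $J_k$ is a convex quadratic in the entries of $K_k$, its Hessian being built from $2(BP_{k-1}B^\top + V_k)\succ 0$ (positive definiteness is guaranteed by the $\delta$-flooring of $V_k$ in the PKF equations), and that the assumed recursive form is, by the unbiasedness computation in Theorem~\ref{theorem_unbiased}, exactly the class of unbiased linear filters. Finally, I would note that the identical computation carries over to the dynamic (E)PKF after inserting the forecast step $P_{k-1}\mapsto P_k^- = FP_{k-1}^+F^\top + W$, yielding the gain~\eqref{gain} and the Joseph-form covariance update.
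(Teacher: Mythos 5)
Your proposal is correct and follows essentially the same route as the paper's Appendix proof: expand the Joseph-form error recursion, kill the cross terms by conditioning on the state and past observations (the vector/conditional form of Lemma~\ref{lem1}), identify the Poisson conditional covariance $\operatorname{diag}(Bx_k)$ as $V_k$, and minimize $\operatorname{trace}(P_k)$ over $K_k$ by setting the matrix derivative to zero. Your added remarks on convexity (hence global optimality) and the transfer to the dynamic two-step filter are sensible refinements the paper leaves implicit.
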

\begin{proof}
Using~(\ref{eqn_rls_errM}), we solve for the estimation error covariance $P_k$ as
\begin{align}
P_k  = & \mathbb{E}\left[\epsilon_{x,k} \epsilon_{x,k}^T   \right] \nonumber \\
=& \mathbb{E} \left\lbrace \left[ ( I - K_k B  )  \epsilon_{x,k-1}  - K_k ( z_k -  B x_{k-1} ) \right] \left[   \epsilon_{x,k-1}^T( I - K_k B )^T    - (z_k -  B x_{k-1} )^T K_k^T  \right]^T   \right\rbrace \nonumber \\
=& ( I - K_k B) \mathbb{E}[\epsilon_{x,k-1} \epsilon_{x,k-1}^T]( I - K_k B )^T -  ( I - K_k B) \mathbb{E}[\epsilon_{x,k-1} (z_k -  B x_{k-1} )^T] K_k^T  \nonumber \\
& - K_k \mathbb{E}[(z_k -  B x_{k-1} )\epsilon_{x,k-1}^T]( I - K_k B )^T +  K_k \mathbb{E}[(z_k -  B x_{k-1} )    (z_k -  B x_{k-1} )^T] K_k^T \label{eqn_pk1}
\end{align}
Since the estimation error at time $k-1$ given by $\epsilon_{x,k-1} = x - \hat x_{k-1}$ is independent of the measurement $z_k$ at time $k$, we have 
\begin{eqnarray*}
\mathbb{E}[\epsilon_{x,k-1} (z_k -  B x_{k-1} )^T]  &=& \mathbb{E}[\epsilon_{x,k-1}]\mathbb{E}[(z_k -  B x_{k-1} )^T]   \\
&=& 0  
\end{eqnarray*}
since the expected value $\mathbb{E}[\epsilon_{x,k-1}]$ and $\mathbb{E}[ z_k -  B x_{k-1}]$ are both zero.
More generally, when $x_{k-1} = x_k$ is a random variable, $\epsilon_{x,k-1}$ may not be independent of the measurement $z_k$, however, the above expectation is still zero since,
\begin{eqnarray*}
\mathbb{E}[\epsilon_{x,k-1} (z_k -  B x )^T]  &=& \mathbb{E}[(x_{k-1} - \hat x_{k-1} ) ( z_k -  B x_{k-1} )^T] \\ &=& \mathbb{E}[\mathbb{E}[(x - \hat x_{k-1} ) (z_k -  B x_{k-1} )^T \, | \, x_{k-1}, z_1,...,z_{k-1}]] \\ &=& \mathbb{E}[(x_{k-1} - \hat x_{k-1} ) \mathbb{E}[( z_k -  B x_{k-1} )^T \, | \, x_{k-1},z_1,...,z_{k-1}]] \\ &=& \mathbb{E}[(x_{k-1} - \hat x_{k-1} ) (\mathbb{E}[z_k \, | \, x_{k-1},z_1,...,z_{k-1}] -  B x_{k-1} )^T ] \\ &=& \mathbb{E}[(x_{k-1} - \hat x_{k-1} ) (\mathbb{E}[z_k \, | \, x_{k-1}] -  B x_{k-1} )^T ]
\\ &=& \mathbb{E}[(x_{k-1} - \hat x_{k-1} ) (Bx_{k-1} -  B x_{k-1} )^T ] \\ 
&=& 0  
\end{eqnarray*}
where the second equality follows from the law of total expectation and $\mathbb{E}[z_k \, | \, x_{k-1},z_1,...,z_{k-1}] = \mathbb{E}[z_k \, | \, x_{k-1}] = Bx_{k-1}$ since the random variables $z_1,...,z_k$ are conditionally independent given $x_{k-1}=x_k$.
Therefore,~(\ref{eqn_pk1}) reduces to
\begin{align}
P_k =& ( I - K_k B) \mathbb{E}[\epsilon_{x,k-1} \epsilon_{x,k-1}^T]( I - K_k B )^T +  K_k \mathbb{E}[(z_k -  B x_{k-1} ) (z_k -  B x_{k-1} )^T] K_k^T \label{eqn_pk2}
\end{align}
Using the fact that $Bx = \mathbb{E}[z_k]$, we rewrite~(\ref{eqn_pk2}) as
\begin{align}
P_k =& ( I - K_k B)\mathbb{E}[\epsilon_{x,k-1} \epsilon_{x,k-1}^T]( I - K_k B )^T +  K_k \mathbb{E}[(z_k -  \mathbb{E}[z_k] ) (z_k - \mathbb{E}[z_k] )^T] K_k^T \label{eqn_pk3}
\end{align}
Recall that for a random variable $Y$ with mean $\mathbb{E}[Y]$, the $i$th central moment of $Y$, which is written as $$ i\mbox{th central moment of } Y =   \mathbb{E}[(Y - \mathbb{E}[Y])^i]$$
equals its variance when $i = 2$ (see Chapter 2 of~\cite{simon2006optimal}). Therefore, 
\begin{align}
\mathbb{E}[(z_k -  \mathbb{E}[z_k] ) ( z_k - \mathbb{E}[ z_k] )^T] = V_k \label{eqn_varZ}
\end{align}
where $V_k \in \mathbb{R}^{m \times m}$, which is written as $V_k = \textup{diag}(B x_k)$, is the covariance of $z_k$. Substituting~(\ref{eqn_varZ}) into~(\ref{eqn_pk3}) gives
\begin{align}
P_k =& ( I - K_k B) P_{k-1} ( I - K_k B )^T +  K_k V_k K_k^T \label{eqn_pk4}
\end{align}
which is the recursive formula for determining the covariance of the least squares estimation error. We then minimize the sum of the estimation error variances at time $k$. From the cost function
\begin{equation}
J_k = \textup{trace}(P_k) \label{eqn_rls_cost}
\end{equation}
we write 
\begin{equation}
\frac{\partial J_k}{\partial K_k} = 2(I - K_k B) P_{k-1}(-B)^T + 2K_k V_k   \label{eqn_rls_cost1}
\end{equation}
Setting~(\ref{eqn_rls_cost1}) equal to zero to find the value of $K_k$ that minimizes $J_k$,
\begin{eqnarray}
K_k V_k  & = & (I - K_k B) P_{k-1} B^T  \nonumber \\
K_k \left( V_k +  B P_{k-1} B^T  \right) & =& P_{k-1} B^T  \nonumber \\
K_k  & =& P_{k-1} B^T  \left( B P_{k-1} B^T    + V_k \right)^{-1} \label{eqn_kalmanGain}
\end{eqnarray}
This implies that the optimal gain matrix $K_k$ given by~(\ref{eqn_kalmanGain}) minimizes the sum of squared errors when $V_k = \textup{diag}(B x_k)$.

\end{proof}
We note that if all the states $x$ are used to generate the output $y$, such that each state $x_i$ is used to generate its Poisson random measurement $z_i$, then $V_k = \mbox{diag}([c_1 x_{1_k} , \:\:  c_2 x_{2_k},  \:\: \cdots, \:\: c_m x_{m_k} ] ) $ where $m = n$. Since the true state $x_k$ is unavailable to the estimator, we replace $x_k$ with $\hat{x}_{k-1}$. Therefore we have $$V_k = \mbox{diag}([c_1 \hat{x}_{1_{k-1}} , \:\:  c_2 \hat{x}_{2_{k-1}},  \:\: \cdots, \:\: c_m \hat{x}_{m_{k-1}} ] )  = \mbox{diag}(B \hat{x}_{k-1} )$$
which results in a suboptimal filter.

\subsection*{Recursive weighted least square estimator algorithm}
 \begin{enumerate}
 \item[1] Initialization
 \begin{eqnarray}
 \hat{x}_0 &=& \mathbb{E}[x] \nonumber \\
 P_0 &=& \mathbb{E} \left[(x - \hat{x}_0)(x - \hat{x}_0)^T \right]\nonumber
 \end{eqnarray}

 \item[2] Estimation
 \begin{eqnarray}
K_k & = & P_{k-1} B^T  \left( B P_{k-1} B^T    + V_k \right)^{-1}  \label{eqn_rls1}\\
\hat{x}_k &=& \hat{x}_{k-1} + K_k \left(  y_k - B \hat{x}_{k-1}   \right) \label{eqn_rls2} \\
P_k &=&  ( I - K_k B) P_{k-1} ( I - K_k B )^T +  K_k V_k K_k^T  \label{eqn_rls3}
 \end{eqnarray}
\end{enumerate}

\subsection{Kalman filter based on WLS with Poisson observations}
Consider the linear stochastic discrete-time dynamic system  with indirect measurements of the state given by 
\begin{align}
x_k &= F_{k-1} x_{k-1} +  G_{k-1}u_{k-1} + w_{k-1},   \qquad   w_k \sim \mathcal{N}(0,\,\sigma^{2}) \label{eqn_KFdyn} \\
y_k & = z_k, \quad \quad \qquad \qquad \qquad \qquad \qquad \qquad    z_k \sim \mbox{Poisson}(\lambda_k = B  x_k) \label{eqn_KFdyn1}
\end{align}
The expected value of both sides of~(\ref{eqn_KFdyn}) is given as
\begin{align}
\bar{x}_k = \mathbb{E}[x_k] = F_{k-1} \bar{x}_{k-1} +  G_{k-1}u_{k-1}
\end{align}
Using
\begin{align}
(x_k - \bar{x}_k)(x_k - \bar{x}_k)^T = & F_{k-1} (x_{k-1} - \bar{x}_{k-1})(x_{k-1} - \bar{x}_{k-1})^T F_{k-1}^T +  w_{k-1} w_{k-1}^T\nonumber \\
 & + F_{k-1} (x_{k-1} - \bar{x}_{k-1})w_{k-1}^T  + w_{k-1}(x_{k-1} - \bar{x}_{k-1})^T F_{k-1}^T
\end{align}
the covariance of $x_k$ is given as
\begin{align}
P_k^- = & \mathbb{E} \left[ (x_k - \bar{x}_k)(x_k - \bar{x}_k)^T   \right] \nonumber \\
= & F_{k-1} P_{k-1}^+ F_{k-1}^T + W_{k-1}\label{eqn_KF_pk}
\end{align}
because $\mathbb{E}[(x_{k-1} - \bar{x}_{k-1})w_{k-1}^T] = 0$, since $(x_{k-1} - \bar{x}_{k-1})$ is uncorrelated with $w_{k-1}$.  Therefore from~(\ref{eqn_KF_pk}), (\ref{eqn_rls1}), (\ref{eqn_rls2}), and (\ref{eqn_rls3}), we replace $\hat{x}_{k-1}$ with $\hat{x}_k^-$,  we replace $P_{k-1}$ with $P_k^-$, we replace $\hat{x}_k$ with $\hat{x}_k^+$, and we replace $P_k$ with $P_k^+$. We then get the Poisson Kalman filter equations for each time step $k = 1, 2, \cdots: $
\begin{eqnarray}
\hat{x}_k^+ &=& \hat{x}_k^- + K_k \left(y_k - B \hat{x}_k^-   \right) \\
P_k^- &= & F_{k-1} P_{k-1}^+ F_{k-1}^T + W_{k-1} \\
K_k & = & P_k^- B^T  \left( B P_k^- B^T    + V_k \right)^{-1} \\
\hat{x}_k^- &=&  F_{k-1} \hat{x}_{k-1}^+  +  G_{k-1}u_{k-1} \\
P_k^+ &=&  ( I - K_k B) P_k^- ( I - K_k B )^T +  K_k V_k K_k^T \label{eqn_covP} 
\end{eqnarray}

\end{document}